\documentclass[11pt]{article}
\usepackage{geometry}
\geometry{a4paper}
\usepackage{amsmath, amssymb, amsthm}
\usepackage{graphicx}
\usepackage{bm}
\usepackage[linesnumbered,ruled]{algorithm2e}
\SetKwRepeat{Do}{do}{while}
\usepackage{fullpage}
\newtheorem{theorem}{Theorem}
\newtheorem{lemma}{Lemma}

\newtheorem{claim}{Claim}

\title{Non-uniformly Stable Matchings\thanks{%
This work was supported by JST ERATO Grant Number JPMJER2301, Japan.}}
\author{Naoyuki Kamiyama}
\date{Institute of Mathematics for Industry \\ Kyushu University, Fukuoka, Japan \\
{\ttfamily kamiyama@imi.kyushu-u.ac.jp}}

\begin{document}

\maketitle

\begin{abstract}
Super-stability and strong stability are properties of a matching in 
the stable matching problem with ties. 
In this paper, we introduce  
a common generalization of super-stability and 
strong stability, which 
we call non-uniform stability. 
First, we prove that we can determine the existence of a 
non-uniformly stable matching in polynomial time.
Next, we give a polyhedral characterization of 
the set of non-uniformly stable 
matchings. 
Finally, we prove that 
the set of non-uniformly stable matchings forms 
a distributive lattice. 
\end{abstract} 

\section{Introduction}

In this paper, we consider a problem of finding a matching satisfying 
some property between two disjoint groups of agents. 
In particular, we are interested in the setting where 
each agent has a preference over agents in the other group. 
Stability, which was introduced by Gale and Shapley~\cite{GaleS62}, 
is one of the most fundamental properties in this setting.
Informally speaking, 
a matching is said to be stable if 
there does not exist an unmatched pair of agents having  
incentive to deviate from the current matching. 
Gale and Shapley~\cite{GaleS62} proved that if 
the preference of an agent is strict (i.e., a strict total order), then
there always exists a stable matching and we can find a stable matching 
in polynomial time. 

Irving~\cite{Irving94} considered 
a variant of the stable matching problem where 
ties are allowed in the preferences of agents. 
For this setting, 
Irving~\cite{Irving94} introduced the following three properties of a matching
(see, e.g., \cite{IwamaM08} and 
\cite[Chapter~3]{Manlove13} for a survey of the stable matching problem 
with ties).  
The first property is weak stability. 
This stability guarantees that 
there does not exist an unmatched pair of agents such that 
both prefer the other agent to the current partner. 
Irving~\cite{Irving94} proved 
that there always exists a weakly stable matching, and 
a weakly stable matching can be found
in polynomial time by using the algorithm 
of Gale and Shapley~\cite{GaleS62} with tie-breaking.
The second one is strong stability.
This stability guarantees that 
there does not exist an unmatched pair of agents such that
the other agent is not worse than the current partner 
for both agents, 
and at least one of the agents prefers the other agent 
to the current partner. 
The third one is super-stability.
This stability guarantees that 
there does not exist an unmatched pair of agents such that
the other agent is not worse than the current partner 
for both agents. 
It is known that a super-stable matching and a strongly stable matching 
may not exist~\cite{Irving94}. 
Thus, the problem of checking 
the existence of a matching satisfying these properties
is 
one of the fundamental algorithmic challenges for 
super-stability and strong stability.
For example, 
Irving~\cite{Irving94} 
proposed 
polynomial-time algorithms for checking the existence of 
a super-stable matching and 
a strongly stable matching
(see also \cite{Manlove99}). 

Similar results have been obtained for 
both super-stability and strong stability. 
Since their definitions are different, 
the same results are separately 
proved for these properties 
by using the same techniques.  
The main motivation of this paper is the following 
theoretical question.
Can we prove the same results  
for super-stability and strong stability at the same time? 
In order to solve this question, 
in this paper, we first introduce a common generalization of 
super-stability and 
strong stability, which 
we call non-uniform stability. 
Then, in Section~\ref{section:algorithm}, we prove that we can check the existence of a 
non-uniformly stable matching in polynomial time.
Our algorithm is a careful unification of the 
algorithms proposed by Irving~\cite{Irving94} 
for super-stability and strong stability. 
Next, in Section~\ref{section:polytope}, we give a polyhedral characterization of 
the set of non-uniformly stable matchings (i.e., 
a linear inequality description of the convex hull of the set of 
characteristic vectors of non-uniformly stable matchings).
For the setting where the preferences are strict, 
polyhedral characterizations of 
the set of stable matchings were given in, e.g., \cite{Fleiner03,Rothblum92,VandeVate89}. 
For the setting where ties in the preferences are allowed, 
Hu and Garg~\cite{HuG21} gave 
a polyhedral characterization of 
the set of super-stable matchings, and 
Kunysz~\cite{Kunysz18}
gave a polyhedral characterization of 
the set of strongly stable matchings
(see also \cite{HuG21,Juarez23}).
Our proposed characterization is a natural 
unification of those given by 
Hu and Garg~\cite{HuG21}
and 
Kunysz~\cite{Kunysz18}.
Our proof is based on the proofs of 
Hu and Garg~\cite{HuG21} for 
super-stability and strong stability.
Finally, in Section~\ref{section:structure}, 
we first prove a structure of the set of vertices covered by 
a non-uniformly stable matching, which 
is a common generalization of the same results 
for 
super-stability~\cite[Lemma~4.1]{Manlove99}
and 
strong stability~\cite[Lemma~3.1]{Manlove99}. 
Then we prove that the set of non-uniformly stable matchings 
forms a distributive lattice.
This result is a common generalization
of the distributive lattice structures of 
stable matchings with strict preferences 
(Knuth~\cite{Knuth97} attributes this result to Conway), 
super-stable matchings~\cite{Spieker95}, and 
strongly stable matchings~\cite{Manlove02}.

\subsection{Related work} 

In the one-to-one setting, 
Irving~\cite{Irving94} 
proposed polynomial-time algorithms for 
the super-stable matching problem and 
the strongly stable matching problem 
(see also \cite{Manlove99}). 
Kunysz, Paluch, and Ghosal~\cite{KunyszPG16}
considered a 
characterization of the set of all strongly stable matchings. 
Kunysz~\cite{Kunysz18} considered 
the weighted version of the strongly stable matching
problem. 
In the many-to-one setting, 
Irving, Manlove, and Scott~\cite{IrvingMS00} proposed 
a polynomial-time algorithm for 
the super-stable matching problem.
Furthermore, 
Irving, Manlove, and Scott~\cite{IrvingMS03} and 
Kavitha, Mehlhorn, Michail, and Paluch~\cite{KavithaMMP07}
proposed polynomial-time algorithms for the 
strongly stable matching
problem. 
In the many-to-many setting, 
Scott~\cite{Scott05} considered 
the super-stable matching problem, 
and the papers~\cite{ChenG10,Kunysz19,Malhotra04} considered 
the strongly stable matching problem.
Olaosebikan and Manlove~\cite{OlaosebikanM20,OlaosebikanM22}
considered super-stability and strong stability in the 
student-project allocation problem. 
Furthermore, 
Kamiyama~\cite{Kamiyama22,Kamiyama22b} considered 
strong stability and super-stability 
under matroid constraints.
Strong stability and super-stability 
with master 
lists were 
considered in, e.g., \cite{IrvingMS08,Kamiyama15,Kamiyama19,OMalley07}.

\section{Preliminaries} 

Let $\mathbb{R}_+$ denote the set of non-negative real numbers. 
For each finite set $U$, each vector $x \in \mathbb{R}_+^U$, 
and each subset $W \subseteq U$, we define 
$x(W) := \sum_{u \in W}x(u)$. 
For each positive integer $z$, we define $[z] := \{1,2,\dots,z\}$. 
Define $[0] := \emptyset$. 

\subsection{Setting} 

Throughout this paper, we are given  
a finite simple undirected bipartite graph $G = (V,E)$ such 
that the vertex set $V$ of $G$ is partitioned into 
$V_1$ and $V_2$, and every edge in the edge set $E$ of $G$ connects a vertex 
in $V_1$ and a vertex in $V_2$. 
In addition, we 
are given subsets $E_1,E_2 \subseteq E$ such that 
$E_1 \cap E_2 = \emptyset$ and $E_1 \cup E_2 = E$.
In this paper, we do not distinguish between an edge $e \in E$ 
and the set consisting 
of the two end vertices of $e$. 
For each edge $e \in E$, if we write $e = (v,w)$, then 
this means that $e \cap V_1 = \{v\}$ and 
$e \cap V_2 = \{w\}$. 

For each subset $F \subseteq E$ and each vertex $v \in V$, 
let $F(v)$ denote the set of edges 
$e \in F$ 
such that $v \in e$. 
Thus, $E(v)$ denotes the set of edges in $E$ incident to $v$. 
A subset $\mu \subseteq E$ is called a \emph{matching in $G$} 
if
$|\mu(v)| \le 1$ for every vertex 
$v \in V$.
For each 
matching $\mu$ in $G$ and 
each vertex $v \in V$ such that $|\mu(v)| = 1$, 
we do not distinguish between 
$\mu(v)$ and the element in $\mu(v)$. 
For each subset $F \subseteq E$, 
a matching $\mu$ in $G$ is called a \emph{matching in $F$} if 
$\mu \subseteq F$. 
Furthermore, for each subset $F \subseteq E$, 
a matching $\mu$ in $F$ is called a \emph{maximum-size matching in $F$} if 
$|\mu| \ge |\sigma|$ for every matching $\sigma$ 
in $F$. 

For each vertex $v \in V$, we are given 
a transitive and complete
binary relation 
$\succsim_v$ on $E(v) \cup \{\emptyset\}$ 
such that
$e \succsim_v \emptyset$ and 
$\emptyset \not\succsim_v e$
for every edge $e \in E(v)$. 
(In this paper, the completeness of a binary relation means 
that, 
for every pair of elements $e,f \in E(v) \cup \{\emptyset\}$,
at least one of $e \succsim_v f$ and $f \succsim_v e$ holds.)
For each vertex $v \in V$ and each pair of elements $e,f \in E(v) \cup \{\emptyset\}$, 
if $e \succsim_v f$ and $f \not\succsim_v e$
(resp.\ $e \succsim_v f$ and $f \succsim_v e$), then 
we write $e \succ_v f$ (resp.\ $e \sim_v f$). 
Intuitively speaking, 
$e \succ_v f$ means that 
$v$ prefers $e$ to $f$, and 
$e \sim_v f$ means that 
$v$ is indifferent between $e$ and $f$.

\subsection{Non-uniform stability} 

Let $\mu$ be a matching in $G$.
For each edge $e \in E \setminus \mu$, 
we say that 
$e$ \emph{weakly blocks} $\mu$ if  
$e \succsim_v \mu(v)$ for every vertex $v \in e$.
Furthermore, 
for each edge $e \in E \setminus \mu$, 
we say that 
$e$ \emph{strongly blocks} $\mu$ if 
$e \succsim_v \mu(v)$ 
for every vertex $v \in e$
and 
there exists a vertex $w \in e$ such that 
$e \succ_w \mu(w)$.
Then $\mu$ is said to be 
\emph{non-uniformly stable} if 
the following two conditions are satisfied. 
\begin{description}
\item[(S1)]
Any edge $e \in E_1 \setminus \mu$ does not weakly 
block $\mu$. 
\item[(S2)] 
Any edge $e \in E_2 \setminus \mu$ does not strongly block $\mu$. 
\end{description}
If $E_1 = E$, then non-uniform stability is equivalent to 
super-stability.
Furthermore, if $E_2 = E$, then non-uniform stability is equivalent to 
strong stability.
Thus, non-uniform stability is a common generalization of 
super-stability and 
strong stability. 
For simplicity, we may say that an edge $e \in E \setminus \mu$ \emph{blocks} $\mu$ 
if the following condition is satisfied. 
If $e \in E_1$, then 
$e$ weakly blocks $\mu$. 
Otherwise (i.e., $e \in E_2$), $e$ strongly blocks $\mu$. 

\subsection{Notation} 

Let $F$ be a subset of $E$. Let $i$ be an integer in $\{1,2\}$. 
Then we define the characteristic vector $\chi_F \in \{0,1\}^E$ of $F$ 
by 
$\chi_F(e) := 1$ for each edge $e \in F$ and 
$\chi_F(e) := 0$ for each edge $e \in E \setminus F$. 
For each subset $X \subseteq V$, 
we define $X(F)$ as the set of vertices $v \in X$ 
such that $F(v) \neq \emptyset$. 
Thus, for each integer $i \in \{1,2\}$, 
$V_i(F)$ denotes the set of vertices $v \in V_i$
such that $F(v) \neq \emptyset$. 
For each subset $X \subseteq V_i$, 
we define $F(X) := \bigcup_{v \in X}F(v)$.
For each subset $X \subseteq V_i$, 
we define $\Gamma_F(X)$ as the set of 
vertices $v \in V_{3-i}$ 
such that $F(X) \cap E(v) \neq \emptyset$.
That is, $\Gamma_F(X)$ is the set of vertices in $V_{3-i}$ adjacent to 
a vertex in $X$ via an edges in $F$.  
For each vertex $v \in V$, we use   
$\Gamma_F(v)$ instead of $\Gamma_F(\{v\})$. 
Define the real-valued function $\rho_{i,F}$ on $2^{V_i(F)}$ by 
$\rho_{i,F}(X) := |\Gamma_F(X)| - |X|$
for each subset $X \subseteq V_i(F)$. 
It is not difficult to see that,
for every pair of 
subsets $X,Y \subseteq V_i(F)$,  
\begin{equation*}
\rho_{i,F}(X) + \rho_{i,F}(Y) \ge \rho_{i,F}(X \cup Y) + \rho_{i,F}(X \cap Y),
\end{equation*}
i.e., $\rho_{i,F}$ is submodular.
Then it is known that 
there exists the unique (inclusion-wise)  
minimal minimizer of $\rho_{i,F}$, and we can find it in polynomial 
time (see, e.g., \cite[Note~10.12]{Murota03}). 
In addition, it is known that
there exists a matching 
$\mu$ in $F$ such that 
$|\mu| = |V_i(F)|$ if and only if, 
for every subset $X \subseteq V_i(F)$, 
$|X| \le |\Gamma_F(X)|$ holds~\cite{Hall35} (see also \cite[Theorem~16.7]{Schrijver02}). 

Let $v$ be a vertex in $V$.
Then, for each edge $e \in E(v)$ and 
each symbol $\odot \in \{\succ_v,\succsim_v, \sim_v\}$, 
we define 
$E[\mathop{\odot} e]$ as the set of edges $f \in E(v)$ 
such that $f \odot e$. 
Furthermore, 
for each edge $e \in E(v)$ and 
each symbol $\odot \in \{\succ_v,\succsim_v\}$, 
we define 
$E[e \mathop{\odot}]$ as the set of edges $f \in E(v)$ 
such that $e \odot f$. 

Let $F$ be a subset of $E$. 
A sequence $(v_1,v_2,\dots,v_k)$ of distinct vertices in 
$V$ is called a \emph{path in $F$} if 
$\{v_i,v_{i+1}\} \in F$ for every integer $i \in [k-1]$. 
A sequence $(v_1,v_2,\dots,v_{k+1})$ of vertices in 
$V$ is called a \emph{cycle in $F$} if 
$(v_1,v_2,\dots,v_k)$ is a path in $F$,  
$v_1 = v_{k+1}$, and 
$\{v_k,v_1\} \in F$. 

\section{Algorithm} 
\label{section:algorithm}

In this section, we propose a polynomial-time algorithm 
for the problem of checking the existence of a non-uniformly stable matching, and 
finding it if it exists.  

For each vertex $v \in V_1$ and each subset $F \subseteq E$, 
we define ${\rm Ch}_v(F)$ as the set of edges $e \in F(v)$ such that 
$e \succsim_v f$ for every edge $f \in F(v)$.  
In other words, for each vertex $v \in V_1$, 
${\rm Ch}_v(F)$ is the set of the most preferable edges in $F(v)$ for $v$. 
For each vertex $v \in V_2$ and each subset 
$F \subseteq E$, we define ${\rm Ch}_v(F)$ as the 
output of Algorithm~\ref{alg:ch_2}.
For each vertex $v \in V_2$, 
${\rm Ch}_v(F)$ is some subset of the most preferable edges in $F(v)$ for $v$.

\begin{algorithm}[ht]
\KwIn{a vertex $v \in V_2$ and a subset $F \subseteq E$}
Define $B := \{e \in F(v) \mid  
\mbox{$e \succsim_v f$ for every edge $f \in F(v)$}\}$.\\
\uIf{$B \cap E_1 = \emptyset$}
{
  Output $B$ and halt. 
}
\uElseIf{$|B \cap E_1| = 1$}
{
  Output $B \cap E_1$ and halt. 
}
\Else
{
  Output $\emptyset$ and halt. 
}
\caption{Algorithm for defining ${\rm Ch}_v(F)$ for a vertex $v \in V_2$}
\label{alg:ch_2}
\end{algorithm}

For each subset $F \subseteq E$, 
we define 
\begin{equation*}
{\rm Ch}_1(F) := 
\bigcup_{v \in V_1}{\rm Ch}_v(F), \ \ \ 
{\rm Ch}_2(F) := 
\bigcup_{v \in V_2}{\rm Ch}_v(F), \ \ \ 
{\rm Ch}(F) := {\rm Ch}_2({\rm Ch}_1(F)). 
\end{equation*}
For each matching $\mu$ in $G$, 
we define ${\sf block}(\mu)$ as the set of edges $e = (v,w) \in E \setminus \mu$ 
such that 
$\mu(w) \neq \emptyset$ and 
$e$ blocks $\mu$. 

Our algorithm is described in Algorithm~\ref{alg:main}. 

\begin{algorithm}[ht]
Set $t := 0$. Define ${\sf R}_0 := \emptyset$.\\
\Do{${\sf R}_t \neq {\sf P}_{t,i_t}$}
{
  Set $t := t + 1$ and $i := 0$. Define ${\sf P}_{t,0} := {\sf R}_{t-1}$.\\
  \Do{${\sf P}_{t,i} \neq {\sf P}_{t,i-1}$}
  {
    Set $i := i+1$.\\
    Define $L_{t,i} := {\rm Ch}(E \setminus {\sf P}_{t,i-1})$ and 
    ${\sf Q}_{t,i} := {\rm Ch}_1(E \setminus {\sf P}_{t,i-1}) \setminus L_{t,i}$.\\  
    \uIf{${\sf Q}_{t,i} \neq \emptyset$}
    {
         Define ${\sf P}_{t,i} := {\sf P}_{t,i-1} \cup {\sf Q}_{t,i}$. 
    }    
    \Else
    {
        Find a maximum-size matching $\mu_{t,i}$ in $L_{t,i}$.\\
        \uIf{$|\mu_{t,i}| < |V_1(E \setminus {\sf P}_{t,i-1})|$}
        {
          Find the minimal minimizer $N_{t,i}$ of $\rho_{1,L_{t,i}}$.\\
          Define ${\sf P}_{t,i} := {\sf P}_{t,i-1} \cup L_{t,i}(N_{t,i})$. 
        }
        \Else
        {
          Define ${\sf P}_{t,i} := {\sf P}_{t,i-1}$.  
        }
    }
  }
    Define $i_t := i$.\\
    \uIf{${\sf P}_{t,i_t} \cap {\sf block}(\mu_{t,i_t}) \neq \emptyset$}
    {
        Define $e_t = (v_t,w_t)$ as an arbitrary edge in ${\sf P}_{t,i_t} 
        \cap {\sf block}(\mu_{t,i_t})$.\\
        Define ${\sf R}_t := {\sf P}_{t,i_t} \cup \{\mu_{t,i_t}(w_t)\}$. 
    }
    \Else
    {
        Define ${\sf R}_t := {\sf P}_{t,i_t}$. 
    }
}
Define $k := t$ and $\mu_{\rm o} := \mu_{k,i_k}$.\\
\uIf{there exists an edge 
$e_{\rm R} = (v_{\rm R},w_{\rm R}) \in {\sf R}_k \cup {\rm Ch}(E \setminus {\sf R}_k)$ such that 
$\mu_{\rm o}(w_{\rm R}) = \emptyset$}
{
   Output {\bf No} and halt. 
}
\Else
{
   Output $\mu_{\rm o}$ and halt. 
}
\caption{Proposed algorithm}
\label{alg:main}
\end{algorithm}

First, we prove that 
Algorithm~\ref{alg:main} is well-defined.
To this end, it is sufficient to 
prove that, 
in Step~20, $\mu_{t,i_t}$ is well-defined.
If ${\sf Q}_{t,i_t} \neq \emptyset$, then 
${\sf P}_{t,i_t-1} \neq {\sf P}_{t,i_t}$. 
However, 
this contradicts the fact that Algorithm~\ref{alg:main} 
proceeds to Step~19. 
Thus, ${\sf Q}_{t,i_t} = \emptyset$. 
In this case, $\mu_{t,i_t}$ is defined 
in Step~10.

In the course of Algorithm~\ref{alg:main}, 
${\sf P}_{t,i-1} \subseteq {\sf P}_{t,i}$ and 
${\sf R}_{t-1} \subseteq {\sf R}_t$. 
Furthermore, we can find $\mu_{t,i}$ in polynomial time
by using, e.g., the algorithm in \cite{HopcroftK73}, 
and 
$N_{t,i}$ can be found in polynomial time
(see, e.g., \cite[Note~10.12]{Murota03}). 
Thus, Algorithm~\ref{alg:main} is a polynomial-time
algorithm. 

In the rest of this section, we prove the correctness of 
Algorithm~\ref{alg:main}. 

First, we prove that if Algorithm~\ref{alg:main}
outputs $\mu_{\rm o}$, then 
$\mu_{\rm o}$ is a non-uniformly stable matching in $G$. 
To this end, we need the following lemmas. 

\begin{lemma} \label{lemma:algorithm_n_non_empty}
In Step~12 of Algorithm~\ref{alg:main}, 
we have $L_{t,i}(N_{t,i}) \neq \emptyset$. 
\end{lemma}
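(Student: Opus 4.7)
The plan is to unwind what it means for the algorithm to have reached Step~12, and then to apply Hall's theorem (as cited in the Preliminaries) to the submodular function $\rho_{1,L_{t,i}}$. Reaching Step~12 means the algorithm took the \textbf{else} branch at Step~8 and the \textbf{if} branch at Step~11, so we have both ${\sf Q}_{t,i} = \emptyset$ and $|\mu_{t,i}| < |V_1(E \setminus {\sf P}_{t,i-1})|$.

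First, I would exploit ${\sf Q}_{t,i} = \emptyset$. By definition ${\sf Q}_{t,i} = {\rm Ch}_1(E \setminus {\sf P}_{t,i-1}) \setminus L_{t,i}$, and the inclusion $L_{t,i} = {\rm Ch}_2({\rm Ch}_1(E \setminus {\sf P}_{t,i-1})) \subseteq {\rm Ch}_1(E \setminus {\sf P}_{t,i-1})$ is immediate from the definition of ${\rm Ch}_2$. Hence ${\sf Q}_{t,i} = \emptyset$ gives the identity $L_{t,i} = {\rm Ch}_1(E \setminus {\sf P}_{t,i-1})$. Next I would verify the auxiliary set equality $V_1(L_{t,i}) = V_1(E \setminus {\sf P}_{t,i-1})$: for each $v \in V_1$, the set ${\rm Ch}_v(E \setminus {\sf P}_{t,i-1})$ consists of the most preferable edges in $(E \setminus {\sf P}_{t,i-1})(v)$, and so is nonempty exactly when $(E \setminus {\sf P}_{t,i-1})(v)$ is nonempty.

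Combining these with the Step~11 condition yields $|\mu_{t,i}| < |V_1(L_{t,i})|$. Because $\mu_{t,i}$ is a maximum-size matching in $L_{t,i}$, no matching in $L_{t,i}$ can cover all of $V_1(L_{t,i})$. By Hall's theorem as stated in the Preliminaries, there is some $X \subseteq V_1(L_{t,i})$ with $|X| > |\Gamma_{L_{t,i}}(X)|$, i.e., $\rho_{1,L_{t,i}}(X) < 0$. Since $\rho_{1,L_{t,i}}(\emptyset) = 0$, the empty set is not a minimizer, and therefore the (inclusion-wise) minimal minimizer $N_{t,i}$ is nonempty.

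Finally, the inclusion $N_{t,i} \subseteq V_1(L_{t,i})$ means that every $v \in N_{t,i}$ has $L_{t,i}(v) \neq \emptyset$, so $L_{t,i}(N_{t,i}) = \bigcup_{v \in N_{t,i}} L_{t,i}(v) \neq \emptyset$, as required. There is no serious obstacle here; the main care is in the bookkeeping step $V_1(L_{t,i}) = V_1(E \setminus {\sf P}_{t,i-1})$, which is what transforms the Step~11 inequality into an actual deficiency statement on $L_{t,i}$ to which Hall's theorem can be applied.
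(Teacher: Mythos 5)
Your proof is correct and follows essentially the same route as the paper's: use ${\sf Q}_{t,i}=\emptyset$ to get $V_1(L_{t,i}) = V_1(E \setminus {\sf P}_{t,i-1})$, invoke Hall's theorem to produce a deficient set, conclude that the minimal minimizer $N_{t,i}$ is nonempty, and hence that $L_{t,i}(N_{t,i}) \neq \emptyset$. The only difference is that you spell out the bookkeeping (e.g., why $L_{t,i} = {\rm Ch}_1(E\setminus{\sf P}_{t,i-1})$) in slightly more detail than the paper does.
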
 
\begin{proof}
Since $N_{t,i} \subseteq V_1(L_{t,i})$, 
it suffices to prove that 
$N_{t,i} \neq \emptyset$. 
Since $V_1(E \setminus {\sf P}_{t,i-1}) = V_1(L_{t,i})$
follows from ${\sf Q}_{t,i}=\emptyset$,
we have $|\mu_{t,i}| < |V_1(L_{t,i})|$. 
Thus, since $\mu_{t,i}$ is a maximum-size matching in $L_{t,i}$,
there exists a subset 
$X \subseteq V_1(L_{t,i})$ such that 
$|X| > |\Gamma_{L_{t,i}}(X)|$, i.e., $\rho_{1,L_{t,i}}(X) < 0$. 
Since $N_{t,i}$ is a minimizer of $\rho_{1,L_{t,i}}$, 
this implies that 
$\rho_{1,L_{t,i}}(N_{t,i}) < 0$.
Thus, $N_{t,i} \neq \emptyset$.
\end{proof} 

Notice that the definition of Algorithm~\ref{alg:main}
implies that 
${\sf P}_{k,i_k-1} = {\sf P}_{k,i_k} = {\sf R}_k$. 
In what follows, we use this fact. 

\begin{lemma} \label{lemma:alg_property} 
In Step~27 of Algorithm~\ref{alg:main}, 
$\mu_{\rm o}(v) \in {\rm Ch}(E \setminus {\sf R}_k)$
for every vertex $v \in V_1(E \setminus {\sf R}_k)$. 
\end{lemma}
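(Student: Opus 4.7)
The plan is to unwind what the termination of Algorithm~\ref{alg:main} forces on the final inner iteration, and then read the claim off the definition of $\mu_{\rm o}$.

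First, I would record the immediate consequences of the two nested do-while loops having terminated. The outer loop terminates when ${\sf R}_t = {\sf P}_{t,i_t}$ and the inner loop terminates when ${\sf P}_{t,i} = {\sf P}_{t,i-1}$; specializing to $t=k$, $i=i_k$ and recalling the remark preceding the lemma gives
\[
{\sf R}_k = {\sf P}_{k,i_k} = {\sf P}_{k,i_k-1},
\]
so in particular $L_{k,i_k} = {\rm Ch}(E \setminus {\sf P}_{k,i_k-1}) = {\rm Ch}(E \setminus {\sf R}_k)$.

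Next, I would pin down which branches were executed in the final inner iteration. If ${\sf Q}_{k,i_k} \neq \emptyset$, Step~9 would set ${\sf P}_{k,i_k} = {\sf P}_{k,i_k-1} \cup {\sf Q}_{k,i_k}$ strictly larger than ${\sf P}_{k,i_k-1}$, contradicting the equality above; hence ${\sf Q}_{k,i_k} = \emptyset$. If moreover $|\mu_{k,i_k}| < |V_1(E \setminus {\sf P}_{k,i_k-1})|$, then Lemma~\ref{lemma:algorithm_n_non_empty} combined with Step~13 would again enlarge ${\sf P}_{k,i_k-1}$, another contradiction. Therefore the else-branch at Step~14 is taken, and $|\mu_{k,i_k}| \ge |V_1(E \setminus {\sf P}_{k,i_k-1})|$.

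Finally, I would translate these facts into the statement. From ${\sf Q}_{k,i_k} = \emptyset$ one has $L_{k,i_k} \supseteq {\rm Ch}_1(E \setminus {\sf P}_{k,i_k-1})$, and since each $v \in V_1$ with $(E \setminus {\sf P}_{k,i_k-1})(v) \neq \emptyset$ satisfies ${\rm Ch}_v(E \setminus {\sf P}_{k,i_k-1}) \neq \emptyset$, it follows that $V_1(L_{k,i_k}) = V_1(E \setminus {\sf P}_{k,i_k-1}) = V_1(E \setminus {\sf R}_k)$. Because $\mu_{k,i_k}$ is a matching in $L_{k,i_k}$ and the above size bound forces equality $|\mu_{k,i_k}| = |V_1(L_{k,i_k})|$, $\mu_{\rm o} = \mu_{k,i_k}$ covers every vertex of $V_1(E \setminus {\sf R}_k)$ by an edge of $L_{k,i_k} = {\rm Ch}(E \setminus {\sf R}_k)$, which is the claim. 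No step is substantively hard; the only subtlety is correctly ruling out the two if-branches of Steps~7--15 at the final iteration and the identification $V_1(L_{k,i_k}) = V_1(E \setminus {\sf R}_k)$, both of which reduce to ${\sf Q}_{k,i_k} = \emptyset$.
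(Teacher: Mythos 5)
Your proof is correct and follows essentially the same route as the paper's: it uses ${\sf P}_{k,i_k-1}={\sf P}_{k,i_k}={\sf R}_k$ together with Lemma~\ref{lemma:algorithm_n_non_empty} and the condition in Step~11 to get $|\mu_{\rm o}|\ge |V_1(E\setminus{\sf R}_k)|$, and then reads the claim off $\mu_{\rm o}\subseteq L_{k,i_k}={\rm Ch}(E\setminus{\sf R}_k)$. You merely spell out a few steps (ruling out ${\sf Q}_{k,i_k}\neq\emptyset$ and the identification $V_1(L_{k,i_k})=V_1(E\setminus{\sf R}_k)$) that the paper leaves implicit or delegates to the surrounding text.
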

\begin{proof}
It follows from Lemma~\ref{lemma:algorithm_n_non_empty} and 
the condition in Step~11
that $|\mu_{\rm o}| \ge |V_1(E \setminus {\sf R}_{k})|$. 
Thus, since $\mu_{\rm o} \subseteq L_{k,i_k} \subseteq E \setminus {\sf R}_k$, 
$\mu_{\rm o}(v) \neq \emptyset$ holds
for every vertex $v \in V_1(E \setminus {\sf R}_k)$. 
This implies that since 
$\mu_{\rm o} \subseteq L_{k,i_k} = {\rm Ch}(E \setminus {\sf R}_k)$, 
$\mu_{\rm o}(v) \in {\rm Ch}(E \setminus {\sf R}_{k})$ 
for every vertex $v \in V_1(E \setminus {\sf R}_k)$. 
\end{proof} 

\begin{lemma} \label{lemma:alg_yes}
If Algorithm~\ref{alg:main} outputs $\mu_{\rm o}$, then 
$\mu_{\rm o}$ is a non-uniformly stable matching in $G$. 
\end{lemma}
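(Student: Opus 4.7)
The plan is to argue by contradiction: assume some $e = (v,w) \in E \setminus \mu_{\rm o}$ blocks $\mu_{\rm o}$ and show that each case leads to a contradiction. Two facts recorded at termination drive the whole argument. First, the outer loop ended with ${\sf R}_k = {\sf P}_{k,i_k}$, and the ``if'' branch of Step~25 cannot produce ${\sf R}_k = {\sf P}_{k,i_k}$ because $\mu_{k,i_k}(w_k) \in L_{k,i_k} \subseteq E \setminus {\sf P}_{k,i_k}$; hence the ``else'' branch was taken, so ${\sf P}_{k,i_k} \cap {\sf block}(\mu_{k,i_k}) = \emptyset$. Second, the inner loop terminated with ${\sf P}_{k,i_k} = {\sf P}_{k,i_k-1}$, which forces ${\sf Q}_{k,i_k} = \emptyset$ and therefore ${\rm Ch}_1(E \setminus {\sf R}_k) = L_{k,i_k} = {\rm Ch}(E \setminus {\sf R}_k)$; denote this common set by $F$.

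The case split is on $e \in {\sf R}_k$ versus $e \notin {\sf R}_k$. When $e \in {\sf R}_k$, either $\mu_{\rm o}(w) \neq \emptyset$, putting $e$ into ${\sf block}(\mu_{\rm o}) \cap {\sf R}_k = \emptyset$, or $\mu_{\rm o}(w) = \emptyset$, in which case the output test in Step~27 would have returned ``No''. When $e \notin {\sf R}_k$, Lemma~\ref{lemma:alg_property} gives $\mu_{\rm o}(v) \in {\rm Ch}(E \setminus {\sf R}_k)$, so $\mu_{\rm o}(v) \succsim_v e$; combining this with the blocking inequality $e \succsim_v \mu_{\rm o}(v)$ yields $e \sim_v \mu_{\rm o}(v)$ and thus $e \in {\rm Ch}_v(E \setminus {\sf R}_k) \subseteq F$. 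If $\mu_{\rm o}(w) = \emptyset$, then $e \in F = {\rm Ch}(E \setminus {\sf R}_k)$ again triggers the ``No'' output, contradicting the assumption that $\mu_{\rm o}$ is returned. So the only remaining case is $e \notin {\sf R}_k$ together with $\mu_{\rm o}(w) \neq \emptyset$, which I expect to be the main obstacle.

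For that case I inspect Algorithm~\ref{alg:ch_2} applied to $w$ and $F$. Both $e$ and $\mu_{\rm o}(w)$ lie in $F(w)$, and since $F = {\rm Ch}_2(F)$ both lie in ${\rm Ch}_w(F)$. Let $B$ be the set of $\succsim_w$-top edges in $F(w)$ defined in Step~1 of Algorithm~\ref{alg:ch_2}. Since ${\rm Ch}_w(F) \neq \emptyset$, we are either in the case $B \cap E_1 = \emptyset$ with ${\rm Ch}_w(F) = B$, or in the case $|B \cap E_1| = 1$ with ${\rm Ch}_w(F) = B \cap E_1$. In the first subcase, both $e$ and $\mu_{\rm o}(w)$ lie in $B \subseteq E_2$, giving $e \sim_w \mu_{\rm o}(w)$ and $e \in E_2$; together with $e \sim_v \mu_{\rm o}(v)$ this denies the strict inequality needed for strong blocking. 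In the second subcase, both edges lie in the singleton $B \cap E_1$, forcing $e = \mu_{\rm o}(w)$, which contradicts $e \notin \mu_{\rm o}$. The delicate point is to thread the $E_1/E_2$ distinction through each branch of Algorithm~\ref{alg:ch_2} so that the weak- and strong-blocking definitions each yield the right contradiction.
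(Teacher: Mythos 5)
Your proposal is correct and follows essentially the same argument as the paper's proof: it relies on the same ingredients (Lemma~\ref{lemma:alg_property}, the termination facts ${\sf R}_k={\sf P}_{k,i_k}$ and ${\sf Q}_{k,i_k}=\emptyset$, the Step-27 test for an uncovered $w$, the emptiness of ${\sf P}_{k,i_k}\cap{\sf block}(\mu_{k,i_k})$, and the branch analysis of Algorithm~\ref{alg:ch_2} at $w$), differing only cosmetically in splitting first on $e\in{\sf R}_k$ rather than on $\mu_{\rm o}(w)=\emptyset$.
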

\begin{proof}
Assume that Algorithm~\ref{alg:main} outputs $\mu_{\rm o}$.
Since $\mu_{\rm o}$ is clearly a matching in $G$, 
we prove that $\mu_{\rm o}$ is non-uniformly stable. 
Assume that there exists an edge $e = (v,w) \in E \setminus \mu_{\rm o}$
blocking $\mu_{\rm o}$. 

We first consider the case where $\mu_{\rm o}(w) = \emptyset$. 
In this case, if $e \in {\sf R}_k$, then 
Algorithm~\ref{alg:main} 
outputs ${\bf No}$ in 
Step~29.  
This is a contradiction. 
Thus, we can assume that $e \notin {\sf R}_k$. 
Then since $e$ blocks $\mu_{\rm o}$,
we have 
$e \succsim_v \mu_{\rm o}(v)$. 
Since $e \in E \setminus {\sf R}_k$, 
we have $v \in V_1(E \setminus {\sf R}_k)$. 
Thus, Lemma~\ref{lemma:alg_property} implies that  
$\mu_{\rm o}(v) \in {\rm Ch}(E \setminus {\sf R}_k)$.  
Since $e \succsim_v \mu_{\rm o}(v)$
and $e \in E \setminus {\sf R}_k$, 
we have 
$e \in {\rm Ch}_1(E \setminus {\sf R}_k)$. 
Since ${\sf Q}_{k,i_k} = \emptyset$, 
this implies that 
$e \in {\rm Ch}(E \setminus {\sf R}_k)$.
This implies that 
Algorithm~\ref{alg:main} 
outputs ${\bf No}$ in 
Step~29.  
This is a contradiction. 

Next, we consider the case where 
$\mu_{\rm o}(w) \neq \emptyset$. 
In this case, $e \in {\sf block}(\mu_{\rm o})$. 
Thus, if $e \in {\sf P}_{k,i_k}$, 
then ${\sf R}_k \neq {\sf P}_{k,i_k}$.
This is a contradiction. 
Thus, $e \notin {\sf P}_{k,i_k} = {\sf R}_k$, i.e., 
$e \in E \setminus {\sf R}_k$.  
This implies that since $\mu_{\rm o} \subseteq {\rm Ch}(E \setminus {\sf R}_k)$
and $e$ blocks $\mu_{\rm o}$, 
we have $e \sim_v \mu_{\rm o}(v)$. 
Thus, $e \in {\rm Ch}_1(E \setminus {\sf R}_k)$. 
Since ${\sf Q}_{k,i_k} = \emptyset$, 
$e \in {\rm Ch}(E \setminus {\sf R}_k)$.
Thus, $e \sim_w \mu_{\rm o}(w)$. 
Since $e$ blocks $\mu_{\rm o}$, 
$e \in E_1$. 
Thus, 
the definition of ${\rm Ch}_w(\cdot)$ implies that 
$\mu_{\rm o}(w) \notin L_{k,i_k}$. 
This contradicts the fact that $\mu_{\rm o} \subseteq L_{k,i_k}$.
This completes the proof. 
\end{proof} 

Next, we prove that
if Algorithm~\ref{alg:main} outputs {\bf No}, then 
there does not exist 
a non-uniformly stable matching in $G$. 
The following lemma plays an important role in the proof of this part. 
We prove this lemma in the next subsection.

\begin{lemma} \label{lemma:alg_intersection} 
For every non-uniformly stable matching $\sigma$ in $G$, 
we have $\sigma \cap {\sf R}_k = \emptyset$. 
\end{lemma}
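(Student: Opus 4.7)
The plan is to prove by induction on the evolution of $\{{\sf P}_{t,i}\}$ and $\{{\sf R}_t\}$ in Algorithm~\ref{alg:main} the stronger invariant that every non-uniformly stable matching $\sigma$ satisfies $\sigma \cap {\sf P}_{t,i} = \emptyset$ for every $(t,i)$, and hence $\sigma \cap {\sf R}_t = \emptyset$ for every $t \in \{0,1,\dots,k\}$. The base case ${\sf R}_0 = \emptyset$ is immediate, and under the hypothesis $\sigma \cap {\sf P}_{t,i-1} = \emptyset$ I will address the three ways in which these sets grow: adjoining ${\sf Q}_{t,i}$ in Step~8, adjoining $L_{t,i}(N_{t,i})$ in Step~12, and adjoining $\mu_{t,i_t}(w_t)$ in Step~22.

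For the ${\sf Q}_{t,i}$ addition, a hypothetical $e=(v,w)\in\sigma\cap{\sf Q}_{t,i}$ is top for $v$ in $E(v)\setminus{\sf P}_{t,i-1}$ yet is rejected by Algorithm~\ref{alg:ch_2} at $w$. A case analysis on the three output branches always exhibits an edge $f \in {\rm Ch}_1(E\setminus{\sf P}_{t,i-1})(w)$ such that either $f \succ_w e$, or $f \sim_w e$ with $f \in E_1$ and $f \neq e$. Since $f$ is a top edge at its $V_1$-endpoint $v'$, the induction hypothesis yields $f \succsim_{v'} \sigma(v')$ and $f \succsim_w e = \sigma(w)$, whence $f$ weakly blocks $\sigma$; in every sub-case this contradicts either (S1) or (S2) depending on whether $f \in E_1$ or $f \in E_2$.

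The $L_{t,i}(N_{t,i})$ addition is the main obstacle. Lemma~\ref{lemma:algorithm_n_non_empty} and the definition of $N_{t,i}$ give $|\Gamma_{L_{t,i}}(N_{t,i})| < |N_{t,i}|$, so it suffices to establish the subsidiary claim that $\sigma(v) \in L_{t,i}(v)$ for every $v \in N_{t,i}$: the resulting injection $N_{t,i} \to \Gamma_{L_{t,i}}(N_{t,i})$ induced by $\sigma$ then contradicts this Hall deficit. The subsidiary claim has two layers. First, one must verify that $\sigma(v)$ is a top edge of $v$ in $E(v)\setminus{\sf P}_{t,i-1}$: if instead some $f \in L_{t,i}(v)$ satisfied $f \succ_v \sigma(v)$, then the failure of $f$ to block $\sigma$ at its $V_2$-endpoint would force $\sigma$ to fail to be top there, initiating a finite preference chain through $V_1$ and $V_2$ which, by finiteness of $V$, must ultimately terminate in a blocking edge. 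Second, once $\sigma(v) \in {\rm Ch}_1(E\setminus{\sf P}_{t,i-1})$ is known, rejection by ${\rm Ch}_2$ at the $V_2$-endpoint can occur only via an $E_1$-tie, which, exactly as in the analysis of ${\sf Q}_{t,i}$, produces a weakly blocking $E_1$-edge and violates (S1). Hence $\sigma(v) \in L_{t,i}(v)$, completing the Hall argument.

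For the $\mu_{t,i_t}(w_t)$ addition, the induction hypothesis gives $e_t \notin \sigma$, so I would suppose $\sigma(w_t) = \mu_{t,i_t}(w_t)$ and derive a contradiction. Since ${\sf Q}_{t,i_t} = \emptyset$ we have ${\sf P}_{t,i_t} = {\sf P}_{t,i_t-1}$, and inner-loop termination ensures $\mu_{t,i_t}$ covers all of $V_1(L_{t,i_t}) = V_1(E\setminus{\sf P}_{t,i_t})$. Either $v_t \notin V_1(L_{t,i_t})$, forcing $\sigma(v_t)=\emptyset$ and $e_t \succ_{v_t} \sigma(v_t)$ trivially, or $\mu_{t,i_t}(v_t) \in L_{t,i_t}$ is a top edge of $v_t$ in $E(v_t)\setminus{\sf P}_{t,i_t}$, giving $\mu_{t,i_t}(v_t) \succsim_{v_t} \sigma(v_t)$. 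Combining with $e_t \succsim_{v_t} \mu_{t,i_t}(v_t)$ and $e_t \succsim_{w_t} \sigma(w_t)$ from the blocking of $\mu_{t,i_t}$, and tracking the location of the strict preference in that block through the transitivity chain, I conclude that $e_t$ blocks $\sigma$, contradicting non-uniform stability.
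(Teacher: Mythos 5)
Your overall architecture --- an induction maintaining $\sigma\cap{\sf P}_{t,i}=\emptyset$, with one case for each of the three ways the forbidden set grows --- is exactly the paper's argument (the paper phrases it as a minimal counterexample, a ``first bad edge,'' rather than an induction, but the content is the same). Your treatment of the ${\sf Q}_{t,i}$ additions and of the Step-22 addition of $\mu_{t,i_t}(w_t)$ matches the paper's Case~1 (first sub-case) and Case~2 and is essentially correct.

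The gap is in the $L_{t,i}(N_{t,i})$ case, precisely where you locate the main difficulty. Your subsidiary claim --- $\sigma(v)\in L_{t,i}(v)$ for \emph{every} $v\in N_{t,i}$ --- is not established by the sketched chain argument, and the chain does not ``terminate in a blocking edge.'' Each link only propagates the assertion ``$\sigma$'s edge is not top-ranked at the current $V_1$-vertex'': from $f_j\succ_{v_j}\sigma(v_j)$ you deduce $\sigma(w_j)\succ_{w_j}f_j$ (else $f_j$ blocks), hence $\sigma(w_j)\notin{\rm Ch}_1(E\setminus{\sf P}_{t,i-1})$, hence the $\sigma$-partner $v_{j+1}$ of $w_j$ again has a strictly better top edge, and so on. No step of this ever exhibits an edge weakly preferred at \emph{both} endpoints; in a finite graph the walk simply closes into an alternating cycle along which every $V_1$-vertex strictly prefers its $L$-edge and every $V_2$-vertex strictly prefers its $\sigma$-edge, a configuration entirely compatible with non-uniform stability. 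The missing ingredient is the \emph{minimality} of $N_{t,i}$ as a minimizer of $\rho_{1,L_{t,i}}$, which you never invoke. The paper's Claim~\ref{claim:alg_special_vertex} uses it in a counting argument: letting $T$ be the set of $v\in N$ with $\sigma(v)\in L$ (nonempty because of the bad edge), if no vertex of $N\setminus T$ had an $L$-neighbor in $\Gamma_\sigma(T)$ then $\rho_{1,L}(N\setminus T)\le\rho_{1,L}(N)$ with $N\setminus T\subsetneq N$, contradicting minimality. For the vertex $v\in N\setminus T$ so produced and its neighbor $w$ with $\sigma(w)\in L$, the edge $e=(v,w)\in L$ satisfies $e\sim_w\sigma(w)$ (both lie in ${\rm Ch}_w$) and $e\succ_v\sigma(v)$ (Claim~\ref{claim:alg_l}), so $e$ blocks $\sigma$ whether $e\in E_1$ or $e\in E_2$. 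You should replace your chain argument with this counting step; your ``second layer'' (handling rejection by ${\rm Ch}_2$ via an $E_1$-tie) is fine but cannot compensate for the first.
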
 

Furthermore, we need the following lemma. 

\begin{lemma} \label{lemma:alg_r}
In the course of Algorithm~\ref{alg:main}, 
for every vertex $v \in V_1$, every edge 
$e \in {\sf R}_t(v)$, and every edge $f \in E(v) \setminus {\sf R}_t$, 
we have $e \succsim_v f$. 
\end{lemma}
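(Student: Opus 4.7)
The idea is that every edge ever added to \({\sf R}_t\) at a vertex \(v \in V_1\) is introduced by one of the update rules in Algorithm~\ref{alg:main} as a member of \({\rm Ch}_1\) applied to the current ``remaining'' edge set, so by definition of \({\rm Ch}_v\) for \(v \in V_1\) it is a most-preferred edge of \(v\) among that remaining set. Combined with the monotonicity chain \({\sf P}_{t,i-1} \subseteq {\sf P}_{t,i} \subseteq {\sf P}_{t,i_t} \subseteq {\sf R}_t\) and \({\sf R}_{t-1} \subseteq {\sf R}_t\), this immediately delivers the stated preference comparison.

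I would run a double induction: an outer induction on \(t\), with an inner induction on \(i\) establishing the analogous auxiliary statement that for every \(v \in V_1\), every \(e \in {\sf P}_{t,i}(v)\), and every \(f \in E(v) \setminus {\sf P}_{t,i}\) one has \(e \succsim_v f\). The outer base case \(t=0\) is vacuous since \({\sf R}_0 = \emptyset\). For the outer step, assume the lemma for \({\sf R}_{t-1}\). The inner base case \(i=0\) is immediate from the outer hypothesis since \({\sf P}_{t,0} = {\sf R}_{t-1}\). For the inner step, any edge \(e \in {\sf P}_{t,i} \setminus {\sf P}_{t,i-1}\) lies either in \({\sf Q}_{t,i} \subseteq {\rm Ch}_1(E \setminus {\sf P}_{t,i-1})\) or in \(L_{t,i}(N_{t,i}) \subseteq L_{t,i} = {\rm Ch}(E \setminus {\sf P}_{t,i-1}) \subseteq {\rm Ch}_1(E \setminus {\sf P}_{t,i-1})\); in either case, letting \(v\) be its \(V_1\)-endpoint, the definition of \({\rm Ch}_v\) gives \(e \succsim_v f'\) for every \(f' \in E(v) \setminus {\sf P}_{t,i-1}\), and since \({\sf P}_{t,i-1} \subseteq {\sf P}_{t,i}\) this covers every \(f \in E(v) \setminus {\sf P}_{t,i}\). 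For an \(e \in {\sf P}_{t,i-1}(v)\) the same conclusion follows from the inner inductive hypothesis and the inclusion \(E(v) \setminus {\sf P}_{t,i} \subseteq E(v) \setminus {\sf P}_{t,i-1}\).

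It remains to pass from \({\sf P}_{t,i_t}\) to \({\sf R}_t\). If \({\sf R}_t = {\sf P}_{t,i_t}\), the conclusion is exactly the inner statement at \(i=i_t\). Otherwise \({\sf R}_t = {\sf P}_{t,i_t} \cup \{\mu_{t,i_t}(w_t)\}\); because \(\mu_{t,i_t} \subseteq L_{t,i_t} \subseteq {\rm Ch}_1(E \setminus {\sf P}_{t,i_t-1})\), the additional edge \(\mu_{t,i_t}(w_t)\), viewed at its \(V_1\)-endpoint \(v'\), is most preferred among \(E(v') \setminus {\sf P}_{t,i_t-1}\), and since \({\sf P}_{t,i_t-1} \subseteq {\sf R}_t\) the desired comparison with every \(f \in E(v') \setminus {\sf R}_t\) follows. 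The remaining edges of \({\sf R}_t\) lie in \({\sf P}_{t,i_t}\), where the inner statement together with \({\sf P}_{t,i_t} \subseteq {\sf R}_t\) gives the claim.

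The main obstacle is essentially bookkeeping rather than any hard argument: one must carefully identify, for each edge newly added in each of the three possible ways (via \({\sf Q}_{t,i}\), via \(L_{t,i}(N_{t,i})\), or via the closing augmentation \(\{\mu_{t,i_t}(w_t)\}\)), the particular ``baseline'' set \(E \setminus {\sf P}_{t,i-1}\) (or \(E \setminus {\sf P}_{t,i_t-1}\)) in whose \({\rm Ch}_1\)-image it sits, and then invoke the correct monotonicity inclusion to transfer the preference to \(E(v) \setminus {\sf R}_t\). There is nothing to do on the \(V_2\)-side, which is fortunate since \({\rm Ch}_v\) there is defined by Algorithm~\ref{alg:ch_2} and can discard most-preferred edges.
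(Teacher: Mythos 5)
Your proof is correct and follows essentially the same route as the paper, which simply cites the two facts you use — that ${\sf P}_{t,i} \setminus {\sf P}_{t,i-1} \subseteq {\rm Ch}_1(E \setminus {\sf P}_{t,i-1})$ and that $\mu_{t,i_t} \subseteq L_{t,i_t} \subseteq {\rm Ch}_1(E \setminus {\sf P}_{t,i_t})$ — and leaves the monotonicity/induction bookkeeping implicit. Your double induction is just the careful write-up of that argument.
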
 
\begin{proof}
This lemma follows from the fact that 
${\sf P}_{t,i} \setminus {\sf P}_{t,i-1} 
\subseteq {\rm Ch}_1(E \setminus {\sf P}_{t,i-1})$ for every integer $i \in [i_t]$
and $\mu_{t,i_t} \subseteq L_{t,i_t} \subseteq {\rm Ch}_1(E \setminus {\sf P}_{t,i_t})$. 
\end{proof}

\begin{lemma} \label{lemma:alg_no} 
If Algorithm~\ref{alg:main} outputs {\bf No}, then 
there does not exist 
a non-uniformly stable matching in $G$. 
\end{lemma}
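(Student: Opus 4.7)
The plan is, assuming toward contradiction that $\sigma$ is a non-uniformly stable matching in $G$, to exhibit an edge that blocks $\sigma$. The starting point is Lemma~\ref{lemma:alg_intersection}, which yields $\sigma \cap {\sf R}_k = \emptyset$. A consequence I will use repeatedly is that for every vertex $v \in V_1$ with $\sigma(v) \neq \emptyset$, the edge $\sigma(v)$ lies in $E(v) \setminus {\sf R}_k$, so $v \in V_1(E \setminus {\sf R}_k)$, and hence $\mu_{\rm o}(v) \neq \emptyset$ by the same reasoning as in the proof of Lemma~\ref{lemma:alg_property}. Let $e_{\rm R} = (v_{\rm R}, w_{\rm R}) \in {\sf R}_k \cup {\rm Ch}(E \setminus {\sf R}_k)$ be the edge that triggers the \textbf{No} output, so $\mu_{\rm o}(w_{\rm R}) = \emptyset$. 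I then split the argument on whether $\sigma(w_{\rm R})$ is empty.

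In the case $\sigma(w_{\rm R}) = \emptyset$, I claim directly that $e_{\rm R}$ blocks $\sigma$. Note first that $e_{\rm R} \notin \sigma$ because $w_{\rm R}$ is unmatched in $\sigma$. For the preference at $v_{\rm R}$: if $e_{\rm R} \in {\sf R}_k$, Lemma~\ref{lemma:alg_r} together with the fact $\sigma(v_{\rm R}) \in (E(v_{\rm R}) \setminus {\sf R}_k) \cup \{\emptyset\}$ gives $e_{\rm R} \succsim_{v_{\rm R}} \sigma(v_{\rm R})$; if instead $e_{\rm R} \in {\rm Ch}(E \setminus {\sf R}_k) \subseteq {\rm Ch}_1(E \setminus {\sf R}_k)$, the same inequality follows from $e_{\rm R}$ being a most-preferred edge at $v_{\rm R}$ in $(E \setminus {\sf R}_k)(v_{\rm R})$. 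At the other endpoint $e_{\rm R} \succ_{w_{\rm R}} \emptyset = \sigma(w_{\rm R})$ holds by definition. Hence $e_{\rm R}$ strongly blocks $\sigma$, contradicting both (S1) and (S2).

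In the case $\sigma(w_{\rm R}) \neq \emptyset$, I argue via an alternating path inside $\mu_{\rm o} \triangle \sigma$. Let $P$ be the connected component of $\mu_{\rm o} \triangle \sigma$ containing $w_{\rm R}$; since $w_{\rm R}$ is matched by $\sigma$ but not by $\mu_{\rm o}$, it has degree $1$ in $\mu_{\rm o} \triangle \sigma$, so $P$ is a simple path starting at $w_{\rm R}$ with a $\sigma$-edge. Whenever $P$ reaches a vertex $v \in V_1$ by entering via a $\sigma$-edge, the observation above gives $\mu_{\rm o}(v) \neq \emptyset$, and $\mu_{\rm o}(v) \neq \sigma(v)$ because the edge $\sigma(v) \in P \subseteq \mu_{\rm o} \triangle \sigma$ cannot simultaneously lie in $\mu_{\rm o}$; hence $P$ continues through $v$ via $\mu_{\rm o}(v)$. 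Consequently $P$ cannot terminate at a vertex of $V_1$, so it ends at some $u \in V_2$ whose incident $P$-edge $h = (u', u)$ belongs to $\mu_{\rm o}$, with $\sigma(u) = \emptyset$ and $h \notin \sigma$. Since $h \in \mu_{\rm o} \subseteq L_{k,i_k} \subseteq {\rm Ch}_1(E \setminus {\sf R}_k)$, the edge $h$ is most-preferred at $u'$ in $(E \setminus {\sf R}_k)(u')$, giving $h \succsim_{u'} \sigma(u')$; combined with $h \succ_u \emptyset = \sigma(u)$, this makes $h$ a strongly blocking edge of $\sigma$, again contradicting non-uniform stability.

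The main obstacle is the alternating-path analysis in the second case: one must carefully justify that $P$ cannot terminate at a vertex of $V_1$ and that the terminal $\mu_{\rm o}$-edge lies in ${\rm Ch}_1(E \setminus {\sf R}_k)$ (so that Lemma~\ref{lemma:alg_r}-style preference information transfers to $\sigma$). This relies simultaneously on Lemma~\ref{lemma:alg_intersection}, on the algorithm's invariants (from Lemmas~\ref{lemma:algorithm_n_non_empty} and~\ref{lemma:alg_property}) that $\mu_{\rm o}$ covers $V_1(E \setminus {\sf R}_k)$, and on $\mu_{\rm o} \subseteq {\rm Ch}(E \setminus {\sf R}_k)$.
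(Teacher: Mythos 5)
Your proof is correct, and it reaches the same final contradiction as the paper (an edge $(v,w)$ with $\sigma(w)=\emptyset$ that is weakly preferred at its $V_1$-endpoint and strictly preferred at $w$, hence blocks $\sigma$), but it gets there by a genuinely different mechanism. The paper's proof is a one-shot counting argument: it sets $\mu^+ := \mu_{\rm o} \cup \{e_{\rm R}\}$, uses Lemma~\ref{lemma:alg_intersection} and Lemma~\ref{lemma:alg_property} to conclude $|V_2(\sigma)| \le |V_2(\mu_{\rm o})| < |V_2(\mu^+)|$ (every $V_1$-vertex covered by $\sigma$ lies in $V_1(E \setminus {\sf R}_k)$ and is therefore covered by $\mu_{\rm o}$), and then extracts by pigeonhole a vertex $w \in V_2$ covered by $\mu^+$ but not by $\sigma$; the edge of $\mu^+$ at $w$ blocks $\sigma$ via Lemma~\ref{lemma:alg_r}, with no case split on $\sigma(w_{\rm R})$. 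You instead construct the witness explicitly: you dispose of the case $\sigma(w_{\rm R}) = \emptyset$ directly with $e_{\rm R}$, and otherwise trace the alternating path in $\mu_{\rm o} \triangle \sigma$ starting at $w_{\rm R}$, showing it cannot stop at a $V_1$-vertex (since $\sigma \cap {\sf R}_k = \emptyset$ forces every $\sigma$-covered $V_1$-vertex into $V_1(E \setminus {\sf R}_k)$, which $\mu_{\rm o}$ covers) and must therefore end at a $V_2$-vertex matched by $\mu_{\rm o}$ but not by $\sigma$, whose $\mu_{\rm o}$-edge lies in ${\rm Ch}_1(E \setminus {\sf R}_k)$ and hence blocks. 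The two arguments rest on exactly the same three ingredients (Lemmas~\ref{lemma:alg_intersection}, \ref{lemma:alg_property}, and the ${\rm Ch}_1$/Lemma~\ref{lemma:alg_r} preference transfer); the paper's counting version is shorter and avoids the path bookkeeping, while yours is constructive and makes visible exactly which edge fails, at the cost of having to verify the parity and termination of the path. Both are sound.
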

\begin{proof} 
Assume that Algorithm~\ref{alg:main}
outputs {\bf No} and 
there exists a non-uniformly stable matching $\sigma$ in $G$. 
Then there exists an edge 
$e_{\rm R} = (v_{\rm R},w_{\rm R}) \in {\sf R}_k \cup {\rm Ch}(E \setminus {\sf R}_k)$ 
such that $\mu_{\rm o}(w_{\rm R}) = \emptyset$. 
We define $\mu^+ := \mu_{\rm o} \cup \{e_{\rm R}\}$. 
Then since $\mu_{\rm o} \subseteq {\rm Ch}(E \setminus {\sf R}_k)$, 
$\mu^+ \subseteq {\sf R}_k \cup {\rm Ch}(E \setminus {\sf R}_k)$. 
Lemma~\ref{lemma:alg_intersection} implies 
that $\sigma \subseteq E \setminus {\sf R}_k$. 
Thus,
since $\mu_{\rm o}(w_{\rm R}) = \emptyset$, 
Lemma~\ref{lemma:alg_property} implies that 
$|V_2(\sigma)| \le |V_2(\mu_{\rm o})| < |V_2(\mu^+)|$. 
This implies that there exists an edge $e = (v,w) \in \mu^+ \setminus \sigma$ such that 
$\mu^+(w) \neq \emptyset$ and $\sigma(w) = \emptyset$. 
Since $e \in \mu^+ \subseteq {\sf R}_k \cup {\rm Ch}(E \setminus {\sf R}_k)$ and 
$\sigma \subseteq E \setminus {\sf R}_k$, 
Lemma~\ref{lemma:alg_r} implies that 
$e \succsim_v \sigma(v)$. 
However, this contradicts the fact that 
$\sigma$ is non-uniformly stable. 
\end{proof}

\begin{theorem} \label{theorem:algorithm}
If Algorithm~\ref{alg:main} outputs $\mu_{\rm o}$, then 
$\mu_{\rm o}$ is a non-uniformly stable matching in $G$. 
If Algorithm~\ref{alg:main} outputs {\bf No}, then 
there does not exist a non-uniformly stable matching in $G$. 
\end{theorem}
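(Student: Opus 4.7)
The statement of Theorem~\ref{theorem:algorithm} is exactly the conjunction of the two already-established lemmas, so the proof I would write is essentially a one-line combination rather than a fresh argument. My plan is to simply observe that the first implication is Lemma~\ref{lemma:alg_yes} verbatim, and the second implication is Lemma~\ref{lemma:alg_no} verbatim, and then invoke both.

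More precisely, first I would remark that in the preceding discussion it has already been verified that Algorithm~\ref{alg:main} is well-defined (in particular, $\mu_{t,i_t}$ is produced in Step~10 whenever it is needed in Step~20) and terminates in polynomial time, so on any input the algorithm produces either a matching $\mu_{\rm o}$ or the output {\bf No}. Consequently, it suffices to handle these two cases separately.

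For the case where the output is $\mu_{\rm o}$, I would directly cite Lemma~\ref{lemma:alg_yes}, which asserts that $\mu_{\rm o}$ is a non-uniformly stable matching in $G$. For the case where the output is {\bf No}, I would directly cite Lemma~\ref{lemma:alg_no}, which asserts that no non-uniformly stable matching in $G$ exists. Combining these two facts yields the theorem.

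There is no real obstacle here, since the substantive content is packaged entirely inside Lemmas~\ref{lemma:alg_yes} and~\ref{lemma:alg_no}; the only non-trivial ingredient used by those lemmas is Lemma~\ref{lemma:alg_intersection}, whose proof is deferred to the next subsection, but for the present theorem that lemma is treated as a black box. Thus the proof is a two-sentence citation, and the main care is simply to match each half of the theorem to the correct lemma.
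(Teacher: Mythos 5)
Your proposal matches the paper's own proof exactly: Theorem~\ref{theorem:algorithm} is proved there by citing Lemma~\ref{lemma:alg_yes} for the first implication and Lemma~\ref{lemma:alg_no} for the second, with all substantive work deferred to those lemmas. Your additional remarks on well-definedness are harmless but not needed.
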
 
\begin{proof}
This theorem follows from Lemmas~\ref{lemma:alg_yes} and 
\ref{lemma:alg_no}. 
\end{proof}

\subsection{Proof of Lemma~\ref{lemma:alg_intersection}}

In this subsection, we prove Lemma~\ref{lemma:alg_intersection}. 

An edge $e \in E$ is called a \emph{bad edge} if 
$e \in {\sf R}_k$ and 
there exists a non-uniformly stable matching $\sigma$ in $G$ 
such that $e \in \sigma$. 
If there does not exist a bad edge in $E$, then 
the proof is done. 
Thus, we assume that there exists a bad edge in $E$.
Then we define $\Delta$ as the set of integers $t \in [k]$ such that 
${\sf R}_t \setminus {\sf R}_{t-1}$ contains a bad edge in $E$. 
Furthermore, we define $z$ as the minimum integer in $\Delta$. 
We divide the proof into the following two cases. 

\begin{description}
\item[Case~1.]
There exists an integer $i \in [i_z]$ such that 
${\sf P}_{z,i} \setminus {\sf P}_{z,i-1}$ contains a bad edge in $E$.
\item[Case~2.]
${\sf P}_{z,i_z} \setminus {\sf R}_{z-1}$ does not contain a bad edge in $E$.
\end{description}

{\bf Case~1.} 
Define $q$ as the minimum integer in $[i_z]$ such that 
${\sf P}_{z,q} \setminus {\sf P}_{z,q-1}$ contains a bad edge in $E$.
Let $f = (v,w)$ be a bad edge in $E$ that is contained in 
${\sf P}_{z,q} \setminus {\sf P}_{z,q-1}$. 
Then there exists a non-uniformly stable matching $\sigma$ in $G$ 
such that $f \in \sigma$. 
If $\sigma \cap {\sf P}_{z,q-1} \neq \emptyset$, then 
this contradicts the minimality of $q$. Thus, 
$\sigma \subseteq E \setminus {\sf P}_{z,q-1}$. 

First, we consider the case where $f \in {\sf Q}_{z,q}$. 
The definition of Algorithm~\ref{alg:ch_2} implies that 
there exists an edge $g = (u,w) \in {\rm Ch}_1(E \setminus {\sf P}_{z,q-1})$ 
satisfying one of the following conditions.
\begin{itemize}
\item
$g \succ_w f$. 
\item
$g \sim_w f$, $g \neq f$, and $g \in E_1$.
\end{itemize}
Furthermore, since $\sigma \subseteq E \setminus {\sf P}_{z,q-1}$, 
$g \succsim_u \sigma(u)$.
Thus, $g$ blocks $\sigma$. 
This is a contradiction. 

Next, we consider the case where $f \in L_{z,q}(N_{z,q})$. 
In this case, 
${\sf Q}_{z,q} = \emptyset$. 
For simplicity, 
we 
define $L := L_{z,q}$ and 
$N := N_{z,q}$. 
Since ${\sf Q}_{z,q} = \emptyset$, 
$L = {\rm Ch}_1(E \setminus {\sf P}_{z,q-1})$.

\begin{claim} \label{claim:alg_special_vertex}
There exists a vertex $v \in N$ satisfying the following 
conditions.
\begin{itemize}
\item
$\sigma(v) \notin L$.
\item
There exists a vertex $w \in \Gamma_L(v)$ such that 
$\sigma(w) \in L$.
\end{itemize} 
\end{claim}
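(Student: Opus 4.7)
The plan is to argue by contradiction: assume no $v \in N$ satisfies both conditions. Partition $N$ into $S := \{v \in N : \sigma(v) \in L\}$ and $T := N \setminus S$. Since $f \in \sigma \cap L(N)$, the $V_1$-endpoint of $f$ lies in $N$ with $\sigma$-match $f \in L$, so it lies in $S$; in particular $S \neq \emptyset$, hence $T \subsetneq N$. The negated claim then forces that for every $v \in T$ (which has $\sigma(v) \notin L$), every $w \in \Gamma_L(v)$ satisfies $\sigma(w) \notin L$.

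The key step is a counting argument. For each $s \in S$, let $\phi(s) \in V_2$ denote the $\sigma$-partner of $s$; since $\sigma(s) \in L$, we have $\phi(s) \in \Gamma_L(s) \subseteq \Gamma_L(N)$, and $\phi \colon S \to V_2$ is injective because $\sigma$ is a matching. Moreover $\sigma(\phi(s)) = \sigma(s) \in L$, so the hypothesis on $T$ gives $\phi(s) \notin \Gamma_L(T)$. Hence $\phi(S)$ and $\Gamma_L(T)$ are disjoint subsets of $\Gamma_L(N)$, yielding $|S| + |\Gamma_L(T)| \le |\Gamma_L(N)|$. Combined with $|S| = |N| - |T|$, this rearranges to
\[
\rho_{1,L}(T) \;=\; |\Gamma_L(T)| - |T| \;\le\; |\Gamma_L(N)| - |N| \;=\; \rho_{1,L}(N).
\]

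To finish, I would invoke the minimality of $N$. Since $N$ minimizes $\rho_{1,L}$, the inequality above is actually an equality, so $T$ is a minimizer as well. Recall from the proof of Lemma~\ref{lemma:algorithm_n_non_empty} that $\rho_{1,L}(N) < 0$; this rules out $T = \emptyset$ because $\rho_{1,L}(\emptyset) = 0$. Therefore $T$ is a nonempty proper subset of $N$ that also minimizes $\rho_{1,L}$, contradicting the fact that $N$ is the unique (inclusion-wise) minimal minimizer.

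The main obstacle is identifying the correct split of $N$ and noticing that, under the contradictory assumption, the $\sigma$-partners of $S$ are forced to lie outside $\Gamma_L(T)$; once this disjointness is established, the remaining work is a routine double count incompatible with $N$ being the minimal minimizer of the submodular function $\rho_{1,L}$.
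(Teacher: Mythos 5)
Your proof is correct and follows essentially the same route as the paper: both split $N$ according to whether $\sigma(v)\in L$, observe the non-empty part via the edge $f$, show under the negated claim that the $\sigma$-partners of that part are disjoint from $\Gamma_L$ of the complement, and derive $\rho_{1,L}(T)\le\rho_{1,L}(N)$ to contradict the minimality of $N$. Your explicit handling of the $T=\emptyset$ case via $\rho_{1,L}(N)<0$ is a small point the paper glosses over, but the argument is the same.
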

\begin{proof}
Define $T$ as the set of vertices $v \in N$ 
such that $\sigma(v) \in L$.
Then the existence of $f$ implies that 
$T \neq \emptyset$. 
Thus, $N\setminus T \subsetneq N$.
Since $\sigma$ is a matching in $G$, 
$|\Gamma_{\sigma}(T)| = |T|$. 
In order to 
prove this claim, 
it is sufficient to prove that 
there exists 
a vertex $v \in N \setminus T$ such that 
$\Gamma_{L}(v) \cap \Gamma_{\sigma}(T) \neq \emptyset$. 
Assume that there exists such a vertex $v \in N \setminus T$.
Let $w$ be a vertex in $\Gamma_{L}(v) \cap \Gamma_{\sigma}(T)$.
Then since $w \in \Gamma_{\sigma}(T)$, 
there exists a vertex $u \in T$ such that $w \in \Gamma_{\sigma}(u)$. 
Since $\sigma$ is a matching in $G$, 
$\Gamma_{\sigma}(u) = \{w\}$.
Thus, $\sigma(w) = \sigma(u) \in L$.  
If $\Gamma_{L}(v) \cap \Gamma_{\sigma}(T) = \emptyset$ for every vertex  
$v \in N \setminus T$, 
then $\Gamma_L(N \setminus T) \subseteq \Gamma_L(N) \setminus \Gamma_{\sigma}(T)$. 
Since $\sigma(u) \in L$ for every vertex $u \in T$, 
$\Gamma_{\sigma}(T) \subseteq \Gamma_L(N)$.
Thus, 
\begin{equation*}
\begin{split}
& |\Gamma_L(N \setminus T)| - |N \setminus T|
\le 
|\Gamma_L(N) \setminus \Gamma_{\sigma}(T)| - |N| + |T|\\
& =   
|\Gamma_L(N)| - |\Gamma_{\sigma}(T)| - |N| + |T|
=  
|\Gamma_L(N)| - |N|.
\end{split}
\end{equation*}
However, this contradicts the fact that 
$N$ is the minimal minimizer of $\rho_{1,L}$. 
\end{proof} 

Let $v$ be a vertex in $N$ satisfying the 
conditions in Claim~\ref{claim:alg_special_vertex}. 
Furthermore, let $w$ be 
a vertex in
$\Gamma_{L}(v)$ such that 
$\sigma(w) \in L$.
Let $e$ be the edge in $L$ such that 
$e = (v,w)$. 
Since 
$e, \sigma(w) \in L$, 
we have $e \sim_w \sigma(w)$.  
Since 
${\sf P}_{z,q-1}$ does not contain a bad edge in $E$, 
$\sigma(v) \in E \setminus {\sf P}_{z,q-1}$. 

\begin{claim} \label{claim:alg_l}
$e \succ_v \sigma(v)$. 
\end{claim}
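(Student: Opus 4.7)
The plan is to exploit the fact, observed just before the claim, that ${\sf Q}_{z,q}=\emptyset$ forces $L={\rm Ch}_1(E\setminus{\sf P}_{z,q-1})$, so that for the $V_1$-endpoint $v$ of $e$ the set $L(v)$ consists of \emph{all} $\succsim_v$-maximal edges of $(E\setminus{\sf P}_{z,q-1})(v)$. The rest is essentially a one-line pigeonhole argument.

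First I would split on whether $\sigma(v)=\emptyset$. In that case $e\succ_v\sigma(v)$ holds trivially from the standing assumption that $e\succsim_v\emptyset$ and $\emptyset\not\succsim_v e$ for every edge $e\in E(v)$, so the claim is immediate. So assume $\sigma(v)\neq\emptyset$. We already know $\sigma\subseteq E\setminus{\sf P}_{z,q-1}$ from the minimality of $q$ (if $\sigma$ intersected ${\sf P}_{z,q-1}$, a bad edge would appear strictly earlier than step $q$), and so $\sigma(v)\in E(v)\cap(E\setminus{\sf P}_{z,q-1})$.

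Next I would use that $e\in L={\rm Ch}_1(E\setminus{\sf P}_{z,q-1})$ together with $v\in V_1$: the definition of ${\rm Ch}_v$ for $v\in V_1$ says $e\succsim_v f$ for every $f\in(E\setminus{\sf P}_{z,q-1})(v)$, hence in particular $e\succsim_v\sigma(v)$. To upgrade this to strict preference, suppose for contradiction that $e\sim_v\sigma(v)$. Then $\sigma(v)$ is itself a $\succsim_v$-maximal edge of $(E\setminus{\sf P}_{z,q-1})(v)$, and by definition of ${\rm Ch}_v$ for $v\in V_1$ it lies in ${\rm Ch}_v(E\setminus{\sf P}_{z,q-1})\subseteq {\rm Ch}_1(E\setminus{\sf P}_{z,q-1})=L$. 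This contradicts the first condition of Claim~\ref{claim:alg_special_vertex}, namely $\sigma(v)\notin L$. Hence $e\succ_v\sigma(v)$.

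I do not expect a real obstacle here; the only subtle point is remembering that $L$ and ${\rm Ch}_1(E\setminus{\sf P}_{z,q-1})$ coincide because ${\sf Q}_{z,q}=\emptyset$ (so that the $V_1$-characterization of $L$ in terms of ``all most preferable edges'' is available) and that the choice of $v$ through Claim~\ref{claim:alg_special_vertex} guarantees $\sigma(v)\notin L$, which is exactly the witness needed to rule out the indifferent case.
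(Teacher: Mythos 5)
Your proof is correct and follows essentially the same route as the paper: both use that ${\sf Q}_{z,q}=\emptyset$ gives $L={\rm Ch}_1(E\setminus{\sf P}_{z,q-1})$, that $\sigma\subseteq E\setminus{\sf P}_{z,q-1}$, and that indifference would force $\sigma(v)\in L$, contradicting Claim~\ref{claim:alg_special_vertex}. The paper phrases it as a single contrapositive (assume $\sigma(v)\succsim_v e$ and derive $\sigma(v)\in L$), while you separate the $\sigma(v)=\emptyset$ case and the tie case explicitly, which is only a cosmetic difference (and slightly more careful).
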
 
\begin{proof}
Recall that $L = {\rm Ch}_1(E \setminus {\sf P}_{z,q-1})$.
If $\sigma(v) \succsim_v e$, then since 
$e \in L$ and $\sigma(v) \in E \setminus {\sf P}_{z,q-1}$, 
we have $\sigma(v) \in L$.
However, this contradicts the fact that $\sigma(v) \notin L$. 
\end{proof} 

Claim~\ref{claim:alg_l} implies that 
$e$ blocks $\sigma$.
This contradicts the fact that $\sigma$ is
non-uniformly stable. 

{\bf Case~2.} 
In this case, 
$\mu_{z,i_z}(w_z)$ is a bad edge in $E$.
Thus, there exists a non-uniformly stable matching $\sigma$ in $G$ 
such that $\mu_{z,i_z}(w_z) \in \sigma$. 
Since ${\sf P}_{z,i_z}$ does not contain a bad edge in $E$, 
we have 
$\sigma \cap {\sf P}_{z,i_z} = \emptyset$ and $e_z \notin \sigma$. 
Since $e_z$ blocks $\mu_{z,i_z}$, 
if $\mu_{z,i_z}(v_z) \succsim_{v_z} \sigma(v_z)$, then 
$e_z$ blocks $\sigma$. 
Since 
\begin{equation*}
\mu_{z,i_z} \subseteq L_{z,i_z} = {\rm Ch}(E \setminus {\sf P}_{z,i_z-1}) 
= {\rm Ch}(E \setminus {\sf P}_{z,i_z}) 
\end{equation*}
and $\sigma \subseteq E \setminus {\sf P}_{z,i_z}$, 
we have 
$\mu_{z,i_z}(v_z) \succsim_{v_z} \sigma(v_z)$. 
This completes the proof. 

\section{Polyhedral Characterization}
\label{section:polytope}  

Define ${\bf P}$ as the convex 
hull of $\{\chi_{\mu} \mid \mbox{$\mu$ is a non-uniformly stable matching in $G$}\}$. 
Furthermore, we define ${\bf S}$ as 
the set of vectors $x \in \mathbb{R}_+^E$ satisfying the following conditions.
\begin{equation} \label{eq_1:constraint} 
x(E(v)) \le 1 
 \ \ \ \mbox{($\forall v \in V$)}.
\end{equation}
\begin{equation} \label{eq_2:constraint} 
\displaystyle{x(e) + \sum_{v \in e} x(E[\succ_v e]) \ge 1} 
 \ \ \ \mbox{($\forall e \in E_1$)}.
\end{equation}
\begin{equation} \label{eq_3:constraint} 
\displaystyle{x(E[\sim_v e]) + \sum_{w \in e}x(E[\succ_w e]) \ge 1} 
 \ \ \ \mbox{($\forall e \in E_2$, $\forall v \in e$)}. 
\end{equation}
Hu and Garg~\cite{HuG21} proved that 
if $E_1 = E$, then ${\bf P} = {\bf S}$.
Furthermore, 
Kunysz~\cite{Kunysz18} proved that 
if $E_2 = E$, then 
${\bf P} = {\bf S}$ (see also \cite{HuG21,Juarez23}).
In this section, we prove the following theorem.

\begin{theorem} \label{theorem:polytope} 
${\bf P} = {\bf S}$.
\end{theorem}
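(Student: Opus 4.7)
The plan is to prove the two inclusions ${\bf P}\subseteq{\bf S}$ and ${\bf S}\subseteq{\bf P}$ separately; the first is a routine verification and the second is the technical heart of the theorem.

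For ${\bf P}\subseteq{\bf S}$, I would let $\mu$ be an arbitrary non-uniformly stable matching and verify that $\chi_\mu$ satisfies each defining inequality. Constraint~(\ref{eq_1:constraint}) holds because $\mu$ is a matching. For~(\ref{eq_2:constraint}) on $e\in E_1$: if $e\in\mu$ then $\chi_\mu(e)=1$; otherwise $e$ does not weakly block $\mu$, so some $v\in e$ satisfies $\mu(v)\succ_v e$, and $\mu(v)\in E[\succ_v e]$ contributes $1$ to the sum. For~(\ref{eq_3:constraint}) on $e\in E_2$ and $v\in e$: if $e\in\mu$ then $\mu(v)=e\in E[\sim_v e]$; otherwise $e$ does not strongly block, so either some $w\in e$ has $\mu(w)\succ_w e$ (contributing through $E[\succ_w e]$), or both endpoints of $e$ are indifferent, in which case $\mu(v)\in E[\sim_v e]$. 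Note that indifference forces $\mu(v)\ne\emptyset$ since $\emptyset\not\sim_v e$ for any edge $e$, so the case analysis is complete.

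For ${\bf S}\subseteq{\bf P}$, I would follow the Hu--Garg strategy of \cite{HuG21}, which handles super-stability and strong stability by a common template. The plan is to show that for every objective vector $c\in\mathbb{R}^E$ the linear program $\max\{c^{\top}x:x\in{\bf S}\}$ attains its optimum at $\chi_{\mu^{*}}$ for some non-uniformly stable matching $\mu^{*}$; since ${\bf P}$ and ${\bf S}$ are compact convex sets with ${\bf P}\subseteq{\bf S}$, equality of their support functions in every direction yields ${\bf P}={\bf S}$. To construct $\mu^{*}$, I would refine each relation $\succsim_v$ by using $c$ to break ties, so that edges of strictly larger $c$-value become strictly preferred while ties remain only within level sets of $c$; the refined relations are still transitive and complete. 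Invoking Algorithm~\ref{alg:main} on the refined instance and applying Theorem~\ref{theorem:algorithm} then produces a non-uniformly stable matching of the refined instance, which is also non-uniformly stable in the original instance since every original blocking edge would block in the refinement as well.

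Optimality of $\chi_{\mu^{*}}$ would then be certified by exhibiting a feasible solution of the LP dual with the same objective value, built from the execution trace of Algorithm~\ref{alg:main}. The sets $\{{\sf R}_t\}_{t=0}^{k}$, $L_{k,i_k}$, and the Hall minimisers $N_{t,i}$ stratify the edges of $E$ into layers, and this stratification determines which of the vertex-incidence duals for~(\ref{eq_1:constraint}), weak-block duals for~(\ref{eq_2:constraint}), and strong-block duals for~(\ref{eq_3:constraint}) should carry positive weight. The main obstacle will be making the dual construction uniform across $E_1$- and $E_2$-edges: for $e\in E_1$ the dual weight is placed on the strict-preference sums exactly as in the super-stable proof, whereas for $e\in E_2$ it must be split between the $\sim_v$-term and the strict-preference terms on the two endpoints of $e$. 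The tie-selection rule of Algorithm~\ref{alg:ch_2} together with the minimality of $N_{t,i}$ have been set up precisely so that this split is self-consistent; checking complementary slackness---that every tight dual constraint corresponds to an edge of $\mu^{*}$ and that the dual total equals $c^{\top}\chi_{\mu^{*}}$---is the core technical difficulty and is where the unification of the Hu--Garg and Kunysz arguments must be carried out edge-by-edge.
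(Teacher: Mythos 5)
Your first inclusion (${\bf P}\subseteq{\bf S}$) is correct and is essentially the paper's Lemma~\ref{lemma:polytope_contain}. The second inclusion is where the proposal breaks down, in two distinct ways. First, the construction of the optimal matching $\mu^{*}$ by tie-refinement is flawed: it is \emph{not} true that ``every original blocking edge would block in the refinement as well.'' If $e\sim_v\mu(v)$ in the original instance and the refinement resolves this tie as $\mu(v)\succ_v e$, then an edge that weakly blocked $\mu$ originally no longer satisfies $e\succsim_v\mu(v)$ in the refined instance; conversely, for $e\in E_2$ a refinement can turn a two-sided indifference into a one-sided strict preference and thereby \emph{create} a strong block that did not exist before. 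So a matching that is non-uniformly stable for the refined preferences need not be non-uniformly stable for the original ones, and the refined instance may fail to have any such matching even when the original does. (This is exactly why super-stability is equivalent to stability under \emph{all} tie-breakings, not under one.) Second, the entire optimality certificate --- the dual solution built from the layers ${\sf R}_t$, $L_{k,i_k}$, $N_{t,i}$ and the verification of complementary slackness --- is announced rather than carried out; you yourself flag it as ``the core technical difficulty,'' but that difficulty is the theorem. You would also need to handle the case where Algorithm~\ref{alg:main} outputs {\bf No}, in which case there is no $\mu^{*}$ at all and you must instead show ${\bf S}=\emptyset$, e.g.\ via an infeasibility certificate; the proposal is silent on this.

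For comparison, the paper avoids LP duality entirely. It proves (Lemma~\ref{lemma:polytope_extreme_point}) that every extreme point $x$ of ${\bf S}$ is integral, and that any integral point of ${\bf S}$ is the characteristic vector of a non-uniformly stable matching (Lemma~\ref{lemma:polytope_zero_one}); since ${\bf S}$ is bounded it is the convex hull of its extreme points, and ${\bf S}\subseteq{\bf P}$ follows. The integrality of $x$ is established by a perturbation argument: from the support of $x$ one extracts two matchings $\mu_1,\mu_2$ saturating the supported vertices on each side (via Hall's theorem and the structural properties in Lemma~\ref{lemma:polytope_property}), shows that $x\pm\varepsilon(\chi_{\mu_1}-\chi_{\mu_2})$ both remain in ${\bf S}$, concludes $\mu_1=\mu_2$ from extremality, and then kills any remaining fractional support with a cycle perturbation. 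If you want to salvage your plan, you would have to replace the tie-refinement step by a direct primal construction (or adopt the paper's extreme-point route) and actually exhibit the dual weights; as written, the argument does not go through.
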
 

Theorem~\ref{theorem:polytope} follows from the following two lemmas. 

\begin{lemma} \label{lemma:polytope_contain} 
For every non-uniformly stable matching $\mu$ in $G$, 
$\chi_{\mu}$ satisfies \eqref{eq_1:constraint}, 
\eqref{eq_2:constraint}, and \eqref{eq_3:constraint}. 
\end{lemma}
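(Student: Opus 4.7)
The plan is to verify the three inequalities one by one for $x = \chi_\mu$, where $\mu$ is a non-uniformly stable matching. In each case I would argue by contradiction: since $\chi_\mu$ is $\{0,1\}$-valued and all the sums in the constraints are nonnegative integers, a violation of the inequality means the left-hand side equals $0$, which imposes strong structural conditions that I can translate into a blocking edge.

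For \eqref{eq_1:constraint}, note that $\chi_\mu(E(v)) = |\mu(v)|$, which is at most $1$ by the definition of a matching; this is immediate.

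For \eqref{eq_2:constraint}, fix $e = (v,w) \in E_1$ and suppose $\chi_\mu(e) + \sum_{u \in e}\chi_\mu(E[\succ_u e]) = 0$. Then $e \notin \mu$, and for each of $u \in \{v,w\}$ either $\mu(u) = \emptyset$ or $\mu(u) \not\succ_u e$. Using completeness of $\succsim_u$ together with the convention $e \succsim_u \emptyset$, I would deduce $e \succsim_u \mu(u)$ for both $u \in e$. Hence $e$ weakly blocks $\mu$, contradicting (S1).

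For \eqref{eq_3:constraint}, fix $e \in E_2$, choose a vertex $v \in e$, and write $e = (v,w)$ (after relabeling if necessary). Suppose $\chi_\mu(E[\sim_v e]) + \sum_{u \in e}\chi_\mu(E[\succ_u e]) = 0$. At the vertex $v$, this forces $\mu(v) \notin E[\sim_v e] \cup E[\succ_v e]$, and in particular $e \notin \mu$; by completeness of $\succsim_v$ the only remaining possibility is $e \succ_v \mu(v)$. At the vertex $w$, we only have $\mu(w) \notin E[\succ_w e]$, which via completeness yields $e \succsim_w \mu(w)$. Thus $e \succsim_u \mu(u)$ for both endpoints $u$, with the strict preference realized at $v$, i.e., $e$ strongly blocks $\mu$. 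Since $e \in E_2$, this contradicts (S2).

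The proof is essentially routine bookkeeping; the only step that requires care is the careful use of completeness in the last paragraph to rule out all cases except $e \succ_v \mu(v)$ at the chosen endpoint. I do not anticipate any genuine obstacle.
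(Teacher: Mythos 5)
Your proof is correct and follows essentially the same route as the paper's: the paper argues directly that stability forces each constraint to hold, while you take the contrapositive (a violated constraint yields a weakly/strongly blocking edge via completeness of $\succsim_v$), which is the same case analysis in equivalent form.
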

\begin{proof}
Since $\mu$ is a matching in $G$, 
$\chi_{\mu}$ satisfies \eqref{eq_1:constraint}.

Next, we prove that $\chi_{\mu}$ satisfies \eqref{eq_2:constraint}.
Let $e$ be an edge in $E_1$.  
If $e \in \mu$, then since $\chi_{\mu}(e) = 1$, 
$\chi_{\mu}$
satisfies \eqref{eq_2:constraint}. 
Assume that $e \notin \mu$. 
Then since $\mu$ is non-uniformly stable,
there exists a vertex $v \in e$ such that 
$\mu(v) \succ_v e$.
This implies that 
$\chi_{\mu}(E[\succ_v e]) \ge 1$. 
Thus, $\chi_{\mu}$ satisfies \eqref{eq_2:constraint}.

Finally, we prove that $\chi_{\mu}$ satisfies \eqref{eq_3:constraint}. 
Let $e$ be an edge in $E_2$.  
If $e \in \mu$, then 
$\chi_{\mu}(E[\sim_v e]) \ge 1$ for every vertex $v \in e$.
Thus, $\chi_{\mu}$ satisfies 
\eqref{eq_3:constraint}.
Assume that $e \notin \mu$. 
Then, in this case, since $\mu$ is non-uniformly stable, 
one of the following 
statements holds.
(i) There exists a vertex $w \in e$ such that 
$\mu(w) \succ_w e$. 
(ii) $\mu(v) \sim_v e$ for every vertex $v \in e$. 
If (i) holds, then $\chi_{\mu}(E[\succ_w e]) \ge 1$.
If (ii) holds, then 
$\chi_{\mu}(E[\sim_v e]) \ge 1$ for every vertex $v \in e$.
Thus, $\chi_{\mu}$ satisfies 
\eqref{eq_3:constraint}.
\end{proof} 

\begin{lemma} \label{lemma:polytope_extreme_point} 
For every extreme point $x$ of ${\bf S}$, 
there exists a non-uniformly stable matching $\mu$ in $G$ such that 
$\chi_{\mu} = x$. 
\end{lemma}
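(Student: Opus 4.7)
The plan is to prove integrality of extreme points of ${\bf S}$; once every extreme point $x^*$ lies in $\{0,1\}^E$, the set $\mu := \{e : x^*(e) = 1\}$ is a matching by \eqref{eq_1:constraint}, and the integer readings of \eqref{eq_2:constraint} and \eqref{eq_3:constraint} respectively forbid any $E_1$-edge from weakly blocking $\mu$ and any $E_2$-edge from strongly blocking $\mu$, so $\mu$ is non-uniformly stable with $\chi_\mu = x^*$.

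For integrality I would argue by contradiction: take an extreme point $x^*$ with fractional support $F := \{e \in E : 0 < x^*(e) < 1\}$ nonempty, and construct a nonzero $y \in \mathbb{R}^E$ supported on $F$ such that $x^* \pm \varepsilon y \in {\bf S}$ for all sufficiently small $\varepsilon > 0$. Entries of $x^*$ outside $F$ are automatically protected, so only the tight constraints among \eqref{eq_1:constraint}--\eqref{eq_3:constraint} restrict $y$. I would restrict to the ``class-preserving'' ansatz in which $\sum_{e \in C \cap F} y(e) = 0$ for every $\sim_v$-equivalence class $C$ at every vertex $v$. A direct calculation then shows that every tight constraint of type \eqref{eq_1:constraint} or \eqref{eq_3:constraint} is automatically preserved, while a tight \eqref{eq_2:constraint} at an edge $e$ reduces to the single equation $y(e) = 0$.

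The existence of a nontrivial such $y$ becomes a combinatorial statement about the bipartite auxiliary graph $H$ whose nodes are the (vertex, $\sim_v$-class) pairs meeting $F$ and whose edges are the elements of $F$ not lying in a tight \eqref{eq_2:constraint}; a nonzero $y$ exists whenever $H$ contains a cycle, obtained by alternately assigning $\pm 1$ along the cycle. Following the Hu--Garg analysis for super-stability and Kunysz's for strong stability, I would split into two cases: if $H$ has a cycle, we are done; otherwise a leaf-peeling on $H$ together with a careful accounting of class masses rewrites $x^*$ as a convex combination of two nearby feasible vectors obtained by flipping a tree branch, contradicting extremeness.

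The main obstacle is the forest case and, more specifically, the simultaneous handling of \eqref{eq_2:constraint}- and \eqref{eq_3:constraint}-tight inequalities: the $\sim_v$ versus $\succ_w$ mix in \eqref{eq_3:constraint} couples class-mass preservation at the two endpoints of each $E_2$-edge, while \eqref{eq_2:constraint} imposes a pointwise $y(e) = 0$ constraint on fractional $E_1$-edges at tight blocking inequalities. Unifying the two perturbation schemes of Hu--Garg and Kunysz under this single $(E_1, E_2)$ partition, so that both classes of tight blocking constraints interact consistently with the class-preserving ansatz, is where the principal technical work lies.
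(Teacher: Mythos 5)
Your reduction to integrality via the $\{0,1\}$ reading of \eqref{eq_1:constraint}--\eqref{eq_3:constraint} matches the paper's Lemma~\ref{lemma:polytope_zero_one}, and your observation that class-preserving directions automatically preserve tight constraints of types \eqref{eq_1:constraint} and \eqref{eq_3:constraint} while a tight \eqref{eq_2:constraint} collapses to $y(e)=0$ is correct (each of $E(v)$, $E[\succ_v e]$, $E[\sim_v e]$ is a union of $\sim_v$-classes); this is exactly the mechanism behind the paper's final cycle perturbation (Claim~\ref{claim:polytope_w2_empty}). But the part you defer --- the forest case --- is where essentially all of the content of the proof lies, and the sketch you give for it does not work. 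If the auxiliary graph $H$ is a forest, every leaf node of $H$ is a (vertex, class) pair incident to a single fractional edge, so class preservation forces $y=0$ on that edge, and peeling leaves propagates $y=0$ through the entire forest: there is \emph{no} nonzero class-preserving direction, so extremeness cannot be contradicted within your ansatz. ``Flipping a tree branch'' necessarily breaks class preservation at the branch point and therefore threatens tight inequalities of types \eqref{eq_1:constraint} and \eqref{eq_3:constraint}; verifying that some such non-class-preserving perturbation is nonetheless feasible is the hard part, and your proposal gives no argument for it.

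This is precisely what the paper supplies instead. From mere feasibility (not tightness) of \eqref{eq_2:constraint} and \eqref{eq_3:constraint} it derives the structural facts of Lemma~\ref{lemma:polytope_property} ($x(E(w))=1$ at the far endpoint of every top edge, ${\sf B}(w)=\{e\}$ for $E_1$-edges, and the class-mass inequality $x(E[\sim_v e])\ge x(E[\sim_w e])$ for $E_2$-edges), uses these with a Hall-type deficiency argument to build two matchings $\mu_1,\mu_2$ saturating all supported vertices out of the top and bottom classes (Lemmas~\ref{lemma:polytope_matching}--\ref{lemma:polytope_e1}), and perturbs along $\chi_{\mu_1}-\chi_{\mu_2}$ --- a direction that is \emph{not} class-preserving and whose membership in ${\bf S}$ requires the long case analysis of Lemma~\ref{lemma:polytope_feasibility}. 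Extremeness then forces $\mu_1=\mu_2$, and only afterwards does the class-preserving cycle argument finish off the remaining tied $E_2$-edges. None of Lemmas~\ref{lemma:polytope_property}--\ref{lemma:polytope_feasibility} has a counterpart in your proposal, so as written the argument has a genuine gap at its central step.
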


We give the proof of Lemma~\ref{lemma:polytope_extreme_point} in the next subsection.

\begin{proof}[Proof of Theorem~\ref{theorem:polytope}]
Since ${\bf S}$ is bounded,
${\bf S}$ is a polytope (see, e.g., \cite[Corollary 3.14.]{ConfortiCZ14}).  
Thus, 
${\bf S}$ is the convex hull of the extreme points of ${\bf S}$
(see, e.g., \cite[Theorem~3.37]{ConfortiCZ14}).
This implies that 
this theorem follows from 
Lemmas~\ref{lemma:polytope_contain} and \ref{lemma:polytope_extreme_point}.
\end{proof} 

\subsection{Proof of Lemma~\ref{lemma:polytope_extreme_point}} 

In this proof, we fix an extreme point $x$ of ${\bf S}$. 
Then we prove that 
there exists a non-uniformly stable matching $\mu$ in $G$ such that 
$\chi_{\mu} = x$. 
The following lemma implies that it is sufficient to 
prove that $x \in \{0,1\}^E$. 

\begin{lemma} \label{lemma:polytope_zero_one}
For every vector $y \in {\bf S} \cap \{0,1\}^E$,  
there exists a non-uniformly stable matching $\mu$ in $G$
such that $\chi_{\mu} = y$. 
\end{lemma}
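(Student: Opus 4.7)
The plan is to take the candidate matching to be $\mu := \{e \in E : y(e) = 1\}$, so that $\chi_{\mu} = y$ by construction, and then verify the two conditions in the definition of non-uniform stability together with the matching property. Throughout, I will exploit the fact that since $y \in \{0,1\}^E$ and $y(E(v)) \le 1$ at every vertex $v$ (by \eqref{eq_1:constraint}), each partial sum $y(F)$ with $F \subseteq E(v)$ lies in $\{0,1\}$; moreover, if $y(F) = 1$, then the unique edge in $F$ with $y$-value $1$ is incident to $v$ and hence equals $\mu(v)$.

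First I would apply \eqref{eq_1:constraint} at every $v \in V$ to obtain $|\mu(v)| \le 1$, so $\mu$ is a matching in $G$. Next, for an edge $e \in E_1 \setminus \mu$ (so $y(e) = 0$), constraint \eqref{eq_2:constraint} gives $\sum_{v \in e} y(E[\succ_v e]) \ge 1$. Since each summand lies in $\{0,1\}$, some endpoint $v \in e$ satisfies $y(E[\succ_v e]) = 1$, producing an edge $f \in \mu$ with $f \succ_v e$. Since $\mu$ is a matching, $f = \mu(v)$, so $\mu(v) \succ_v e$ and $e$ does not weakly block $\mu$.

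For an edge $e \in E_2 \setminus \mu$, I would split on whether $\sum_{w \in e} y(E[\succ_w e]) \ge 1$ or $\sum_{w \in e} y(E[\succ_w e]) = 0$. In the first subcase, the argument of the previous paragraph again produces a vertex $w \in e$ with $\mu(w) \succ_w e$, which forbids $e$ from strongly blocking. In the second subcase, constraint \eqref{eq_3:constraint} applied at each $v \in e$ forces $y(E[\sim_v e]) \ge 1$; since $e \in E[\sim_v e]$ but $y(e) = 0$, there must be another edge $f_v \in E(v)$ with $f_v \sim_v e$ and $y(f_v) = 1$, and $f_v = \mu(v)$ by the matching property. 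Then $\mu(v) \sim_v e$ at both endpoints of $e$, so $\mu(w) \succsim_w e$ for every $w \in e$, and $e$ does not strongly block $\mu$.

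The argument has no genuinely hard step: the work is a routine bookkeeping exercise using the three inequality families combined with the $\{0,1\}$-structure of $y$. The one place requiring mild care is the strong-blocking case, where one must correctly exploit both copies of \eqref{eq_3:constraint} (one per endpoint) and observe that the edge $e$ itself contributes to $E[\sim_v e]$ so that the constraint produces a genuinely different matched edge at each endpoint.
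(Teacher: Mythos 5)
Your proposal is correct and follows essentially the same route as the paper's proof: define $\mu$ as the support of $y$, use \eqref{eq_1:constraint} for the matching property, and then read off from \eqref{eq_2:constraint} and \eqref{eq_3:constraint}, via the $\{0,1\}$-structure of $y$, that no edge of $E_1$ weakly blocks and no edge of $E_2$ strongly blocks. Your version merely spells out a few details (e.g., that $e$ itself contributes $0$ to $y(E[\sim_v e])$, forcing a distinct matched edge $\mu(v)\sim_v e$) that the paper leaves implicit.
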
 
\begin{proof}
Let $y$ be a vector in ${\bf S} \cap \{0,1\}^E$. 
Define $\mu$ as the set of edges $e \in E$ 
such that $y(e) = 1$. 
Clearly, $\chi_{\mu} = y$. Thus, it suffices 
to prove that $\mu$ is a non-uniformly stable matching in $G$.
Since \eqref{eq_1:constraint} implies that 
$\mu$ is a matching in $G$, 
we prove that 
$\mu$ is non-uniformly stable. 

Let $e$ be an edge in $E \setminus \mu$. 
First, we assume that $e \in E_1$. 
Since $e \notin \mu$, $y(e) = 0$.  
Thus, \eqref{eq_2:constraint} implies that 
there exist a vertex $v \in e$ and 
an edge $f \in E(v)$ such that 
$y(f) = 1$ and $f \succ_v e$. 
This implies that
$e$ does not block $\mu$. 
Next, we assume that $e \in E_2$. 
Then 
\eqref{eq_3:constraint} implies that 
at least 
one of the following statements holds.
(i) There exist a vertex $w \in e$ and 
an edge $f \in E(w)$ such that 
$y(f) = 1$ and $f \succ_w e$. 
(ii) For every vertex $v \in e$, 
there exists an edge $f \in E(v)$ 
such that 
$y(f) = 1$ and $f \sim_v e$. 
In any case, 
$e$ does not block $\mu$. 
This completes the proof. 
\end{proof} 

In the rest of this proof, we prove that $x \in \{0,1\}^E$. 

Define ${\sf supp}$ as the set of 
edges $e \in E$ such that $x(e) > 0$. 
For each vertex $v \in V$ such that 
${\sf supp}(v) \neq \emptyset$, we define 
${\sf T}(v)$ (resp.\ ${\sf B}(v)$)
as the set of edges $e \in {\sf supp}(v)$
such that 
$e \succsim_v f$ 
(resp.\ $f \succsim_v e$) 
for every edge $f \in {\sf supp}(v)$.
For each integer $i \in \{1,2\}$, 
we define $S_i$ as the set of vertices $v \in V_i$ such that 
${\sf supp}(v) \neq \emptyset$. 
For each integer $i \in \{1,2\}$, 
we define $F_i := \bigcup_{v \in S_i}{\sf T}(v)$.
It should be noted that 
$S_i = V_i(F_i)$. 

\begin{lemma} \label{lemma:polytope_property}
Let $v$ and $e = \{v,w\}$ be a vertex in $V$ and 
an edge in ${\sf T}(v)$, respectively.
\begin{description}
\item[(E1)]
If $e \in E_1$, then  
$x(E(w)) = 1$, and 
${\sf B}(w) = \{e\}$.
\item[(E2)]
If $e \in E_2$, then  
$x(E(w)) = 1$,
$e \in {\sf B}(w)$, and 
$x(E[\sim_v e]) \ge x(E[\sim_w e])$. 
\end{description}
\end{lemma}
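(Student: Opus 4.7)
The plan is to exploit the maximality of $e$ at $v$ together with membership $x \in {\bf S}$, squeezing the stability inequalities \eqref{eq_2:constraint}--\eqref{eq_3:constraint} against the matching bound \eqref{eq_1:constraint} at $w$ to force tightness. The first observation, common to both cases, is that since $e \in {\sf T}(v)$, every edge strictly preferred to $e$ at $v$ lies outside ${\sf supp}$, so $x(E[\succ_v e]) = 0$.

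For (E1) I would apply \eqref{eq_2:constraint} to $e$, which under this observation collapses to $x(e) + x(E[\succ_w e]) \ge 1$. Because $\{e\}$ and $E[\succ_w e]$ are disjoint subsets of $E(w)$, the left-hand side is also at most $x(E(w)) \le 1$, forcing both bounds to be tight. This yields $x(E(w)) = 1$ together with $x(f) = 0$ for every $f \in E(w) \setminus (\{e\} \cup E[\succ_w e])$; since $x(e) > 0$ and every edge of $E[\succ_w e]$ is strictly $w$-preferred to $e$, the unique $\succsim_w$-minimum of ${\sf supp}(w)$ is $e$, giving ${\sf B}(w) = \{e\}$.

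For (E2) I would apply \eqref{eq_3:constraint} to the two pairs $(e,v)$ and $(e,w)$; after the simplification $x(E[\succ_v e]) = 0$ these become
\[
x(E[\sim_v e]) + x(E[\succ_w e]) \ge 1 \quad\text{and}\quad x(E[\sim_w e]) + x(E[\succ_w e]) \ge 1.
\]
The two summands in the second inequality are disjoint subsets of $E(w)$, so squeezing against $x(E(w)) \le 1$ forces equality there and in \eqref{eq_1:constraint} at $w$, giving $x(E(w)) = 1$ and $x(f) = 0$ for every $f \in E(w)$ with $e \succ_w f$; combined with $x(e) > 0$ this shows $e \in {\sf B}(w)$. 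Subtracting the equality form of the second inequality from the first then produces $x(E[\sim_v e]) \ge x(E[\sim_w e])$. The main conceptual step I would flag is recognising that \eqref{eq_3:constraint} must be invoked at \emph{both} endpoints of $e$: the $w$-instance yields the tightness needed for $x(E(w)) = 1$ and $e \in {\sf B}(w)$, while the $v$-instance supplies the $x(E[\sim_v e])$ term needed for the final comparison. Beyond that, the only remaining work is verifying the relevant set disjointness (notably $e \in E[\sim_w e]$ is disjoint from $E[\succ_w e]$), which is routine.
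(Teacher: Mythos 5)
Your proposal is correct and follows essentially the same route as the paper: both arguments use $e \in {\sf T}(v)$ to conclude $x(E[\succ_v e]) = 0$, then squeeze \eqref{eq_2:constraint} (resp.\ the $(e,w)$-instance of \eqref{eq_3:constraint}) against $x(E(w)) \le 1$ to force tightness, extract $x(E(w)) = 1$ and the vanishing of $x$ on the edges $w$ likes less than $e$, and finally compare the two instances of \eqref{eq_3:constraint} to get $x(E[\sim_v e]) \ge x(E[\sim_w e])$. The only cosmetic difference is that you obtain the last inequality by subtracting the tight $(e,w)$-instance from the $(e,v)$-instance, whereas the paper substitutes $x(E[e\succsim_w]) = x(E[\sim_w e])$ into a single chain; these are the same computation.
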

\begin{proof}
{\bf (E1)}
Since $e \in {\sf T}(v)$, 
$x(f) = 0$ for every edge 
$f \in E[\succ_v e]$. 
Thus, \eqref{eq_2:constraint} implies that 
\begin{equation*}
\begin{split}
1 & \le x(e) + x(E[\succ_v e]) + x(E[\succ_w e])  
= x(e) + x(E[\succ_w e]) \\
& = x(E(w)) - x(E[e \succ_w]) - x(E[\sim_w e] \setminus \{e\}) 
 \le 1 - x(E[e \succ_w]) - x(E[\sim_w e] \setminus \{e\}) \le 1. 
\end{split}
\end{equation*}
This implies that
(i) $x(E(w)) = 1$, 
(ii) $x(f) = 0$ for every edge $f \in E[e \succ_w]$, and 
(iii) $x(f) = 0$ for every edge $f \in E[\sim_w e] \setminus \{e\}$.
Thus, (ii) and (iii) 
imply that ${\sf B}(w) = \{e\}$. 

{\bf (E2)}
Since $e \in {\sf T}(v)$, 
$x(f) = 0$ for every edge 
$f \in E[\succ_v e]$.
Thus, \eqref{eq_3:constraint} for $w$ implies that 
\begin{equation*}
\begin{split}
1 & \le x(E[\sim_w e]) + x(E[\succ_v e]) + x(E[\succ_w e])  
=  x(E[\sim_w e]) + x(E[\succ_w e]) \\
& = x(E(w)) - x(E[e \succ_w]) 
 \le 1 - x(E[e \succ_w]) \le 1.
\end{split}
\end{equation*}
Thus, we have (i) $x(E(w)) = 1$, and 
(ii) $x(f) = 0$ for every edge $f \in E[e \succ_w]$.
Thus, (ii) implies that $e \in {\sf B}(w)$. 
Furthermore, \eqref{eq_3:constraint} for $v$ implies that 
\begin{equation*}
\begin{split}
1 & \le x(E[\sim_v e]) + x(E[\succ_v e]) + x(E[\succ_w e]) 
= x(E[\sim_v e]) + x(E[\succ_w e]) \\
& = x(E[\sim_v e]) + x(E(w)) - x(E[e \succsim_w]) 
\le x(E[\sim_v e]) + 1 - x(E[e \succsim_w]) \\
& = x(E[\sim_v e]) + 1 - x(E[\sim_w e]).
\end{split}
\end{equation*}
This implies that 
$x(E[\sim_v e]) \ge x(E[\sim_w e])$. 
This completes the proof. 
\end{proof} 

\begin{lemma} \label{lemma:polytope_matching}
For every integer $i \in \{1,2\}$, 
there exists a matching $\mu_i$ in $F_i$ such that 
$|\mu_i| = |S_i|$.
\end{lemma}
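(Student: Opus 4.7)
The plan is to construct an explicit fractional perfect matching $y$ on $F_i$ saturating $S_i$, from which the existence of an integer matching $\mu_i\subseteq F_i$ of size $|S_i|$ will follow via Hall's theorem (cited in Section~2). For every edge $e\in F_i$ there is a unique $V_i$-endpoint $v$ with $e\in{\sf T}(v)$; the assumption $v\in S_i$ ensures ${\sf T}(v)\ne\emptyset$ with all of its edges having positive $x$-value, so $x({\sf T}(v))>0$, and I set
\begin{equation*}
y(e) := \frac{x(e)}{x({\sf T}(v))}.
\end{equation*}
By construction $y(F_i(v))=1$ for every $v\in S_i$, and consequently $\sum_{e\in F_i}y(e)=|S_i|$.

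The heart of the plan is to show $y(F_i(w))\le 1$ for every $w\in V_{3-i}$. Fix such a $w$ and any edge $(v,w)\in F_i$, so that $(v,w)\in{\sf T}(v)$. Lemma~\ref{lemma:polytope_property} gives $(v,w)\in{\sf B}(w)$ in both cases (in case (E1) the stronger identity ${\sf B}(w)=\{(v,w)\}$), and yields the inequality $x({\sf T}(v))\ge x({\sf B}(w))$: in case (E2) after identifying $x(E[\sim_v(v,w)])$ with $x({\sf T}(v))$ and $x(E[\sim_w(v,w)])$ with $x({\sf B}(w))$ (since $(v,w)\in{\sf T}(v)$ and $(v,w)\in{\sf B}(w)$, respectively), and in case (E1) because ${\sf B}(w)=\{(v,w)\}\subseteq{\sf T}(v)$. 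Hence
\begin{equation*}
y(F_i(w)) \;=\; \sum_{(v,w)\in F_i}\frac{x((v,w))}{x({\sf T}(v))} \;\le\; \frac{1}{x({\sf B}(w))}\sum_{(v,w)\in F_i}x((v,w)) \;\le\; 1,
\end{equation*}
where the last inequality uses that every edge $(v,w)$ appearing in the sum is a distinct element of ${\sf B}(w)$, so $\sum_{(v,w)\in F_i}x((v,w))\le x({\sf B}(w))$.

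From the feasibility of $y$ Hall's condition is immediate: for every $X\subseteq S_i$ I have $|X|=\sum_{v\in X}y(F_i(v))=\sum_{e\in F_i(X)}y(e)\le\sum_{w\in\Gamma_{F_i}(X)}y(F_i(w))\le|\Gamma_{F_i}(X)|$, and then the theorem of Hall cited in Section~2 produces the desired matching $\mu_i\subseteq F_i$ with $|\mu_i|=|S_i|$. The delicate step is precisely the uniform bound $y(F_i(w))\le 1$: the singleton-structure consequence of Lemma~\ref{lemma:polytope_property}(E1) (forcing at most one contributing $v$ when an $E_1$-top edge is incident with $w$) and the inequality of Lemma~\ref{lemma:polytope_property}(E2) (which allows several contributors, all inside ${\sf B}(w)$) must be handled simultaneously at each $w\in V_{3-i}$, and it is exactly the common generalization captured by the preceding lemma that makes this uniform treatment possible.
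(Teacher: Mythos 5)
Your proof is correct, but it is organized quite differently from the paper's. The paper argues by contradiction: it takes the minimal minimizer $N$ of $\rho_{i,F_i}$ (a minimal Hall violator), proves an auxiliary claim that $M:=\Gamma_{F_i}(N)$ can be saturated by a matching $\sigma$ in $F_i(N)$, and then derives a contradiction by double-counting $\sum_{v}x({\sf T}(v))$ against $\sum_{w}x({\sf B}(w))$ along $\sigma$, with the surplus vertex in $N\setminus N(\sigma)$ supplying a strictly positive extra term. You instead verify Hall's condition directly by exhibiting the fractional matching $y(e)=x(e)/x({\sf T}(v))$, which saturates every $v\in S_i$ by construction and satisfies $y(F_i(w))\le 1$ on the other side. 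Both arguments ultimately rest on the same two consequences of Lemma~\ref{lemma:polytope_property} --- that every edge of $F_i(w)$ lies in ${\sf B}(w)$ (so $x(F_i(w))\le x({\sf B}(w))$), and that $x({\sf T}(v))\ge x({\sf B}(w))$ for each such edge (in case (E2) via the identifications $x(E[\sim_v e])=x({\sf T}(v))$ and $x(E[\sim_w e])=x({\sf B}(w))$, which you correctly justify from $e\in{\sf T}(v)$ and $e\in{\sf B}(w)$; in case (E1) trivially) --- but your route avoids both the minimal-minimizer machinery and the auxiliary saturating-matching claim, so it is arguably shorter and more transparent. The only blemishes are notational: the sums in your display should range over $F_i(w)$ rather than $F_i$, and you should note explicitly that $x({\sf B}(w))>0$ (which holds since $F_i(w)\ne\emptyset$ forces ${\sf supp}(w)\ne\emptyset$); neither affects correctness.
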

\begin{proof}
Let $i$ be an integer in $\{1,2\}$. 
Assume that there does not exist
a matching $\mu_i$ in $F_i$ such that 
$|\mu_i| = |S_i|$.
Since $S_i = V_i(F_i)$, 
there exists a subset $N \subseteq V_i(F_i)$ such that 
$|N| > |\Gamma_{F_i}(N)|$. 
Let $N$ be the minimal minimizer of $\rho_{i,F_i}$. 
For simplicity, we define $M := \Gamma_{F_i}(N)$. 
\begin{claim}
There exists a matching $\sigma$ in $F_i(N)$ 
such that $|\sigma| = |M|$. 
\end{claim}
\begin{proof}
Assume that 
there does not exist a matching $\sigma$ in $F_i(N)$ 
such that $|\sigma| = |M|$. 
In this case, 
since $M = V_{3-i}(F_i(N))$, 
there exists a subset $X \subseteq M$ such that 
$|X| > |\Gamma_{F_i(N)}(X)|$. 
Since $\Gamma_{F_i}(v) \cap X = \emptyset$ for every vertex 
$v \in N \setminus \Gamma_{F_i(N)}(X)$, 
$\Gamma_{F_i}(N \setminus \Gamma_{F_i(N)}(X)) \subseteq M \setminus X$.
Thus, 
\begin{equation*}
\begin{split}
& |M| - |N| - |\Gamma_{F_i}(N \setminus \Gamma_{F_i(N)}(X))| 
+ |N \setminus \Gamma_{F_i(N)}(X)|\\
& \ge
|M| - |N| - |M| + |X| + |N| - |\Gamma_{F_i(N)}(X)| 
= |X| - |\Gamma_{F_i(N)}(X)| > 0. 
\end{split}
\end{equation*}
This contradicts the fact that $N$ is a minimizer of $\rho_{i,F_i}$. 
\end{proof} 

Let $\sigma$ be a matching in $F_i(N)$ 
such that $|\sigma| = |M|$.
Then Lemma~\ref{lemma:polytope_property} implies that 
\begin{equation*}
\sum_{v \in N(\sigma)}x(E[\sim_v \sigma(v)])
\ge 
\sum_{w \in M}x(E[\sim_w \sigma(w)]). 
\end{equation*}
Furthermore, 
since Lemma~\ref{lemma:polytope_property} implies that 
${\sf T}(v) \subseteq \bigcup_{w \in M}{\sf B}(w)$ 
for every vertex $v \in N(\sigma)$,
\begin{equation*}
\sum_{w \in M}x(E[\sim_w \sigma(w)])
= 
\sum_{w \in M}x({\sf B}(w))
\ge
\sum_{v \in N(\sigma)}x({\sf T}(v))
= 
\sum_{v \in N(\sigma)}x(E[\sim_v \sigma(v)]).
\end{equation*}
Thus, we have 
\begin{equation*}
\sum_{w \in M}x({\sf B}(w))
=  
\sum_{v \in N(\sigma)}x({\sf T}(v)).
\end{equation*}
Since $|N| > |M|$, 
there exists an edge $f \in F_i(N \setminus N(\sigma))$.
Furthermore, Lemma~\ref{lemma:polytope_property} implies that 
$f \in {\sf B}(w)$ for some vertex $w \in M$. 
Thus, 
\begin{equation*}
\sum_{w \in M}x({\sf B}(w))
=  
\sum_{v \in N(\sigma)}x({\sf T}(v))
<  
x(f) + \sum_{v \in N(\sigma)}x({\sf T}(v)) 
\le 
\sum_{w \in M}x({\sf B}(v)). 
\end{equation*}
However, this is a contradiction. 
\end{proof} 

In what follows, 
for each integer $i \in \{1,2\}$, 
let $\mu_i$ be a matching in $F_i$ such that 
$|\mu_i| = |S_i|$.

\begin{lemma} \label{lemma:polytope_size}
$|S_1| = |S_2|$. 
\end{lemma}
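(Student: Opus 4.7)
The plan is to deduce $|S_1|\le|S_2|$ and $|S_2|\le|S_1|$ directly from the two matchings supplied by Lemma~\ref{lemma:polytope_matching}, using nothing more than the definitions of $F_i$, $S_i$, and ${\sf supp}$.

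For the first inequality, I would invoke the matching $\mu_1$ in $F_1$ with $|\mu_1|=|S_1|$. Since $F_1=\bigcup_{v\in S_1}{\sf T}(v)\subseteq\bigcup_{v\in S_1}E(v)$, every edge of $\mu_1$ has its $V_1$-endpoint in $S_1$; the matching property, combined with $|\mu_1|=|S_1|$, then forces the $V_1$-endpoints of $\mu_1$ to be exactly $S_1$. Looking at the $V_2$-side, $\mu_1$ contributes $|S_1|$ distinct endpoints $w\in V_2$, and each such $w$ satisfies ${\sf supp}(w)\neq\emptyset$ because $\mu_1(w)\in F_1\subseteq{\sf supp}$. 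Hence these endpoints lie in $S_2$, giving $|S_2|\ge|S_1|$.

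Applying the same argument to $\mu_2$, viewed from the $V_2$-side, yields $|S_1|\ge|S_2|$, and the two inequalities together give the claim. I do not anticipate any real obstacle here: all the genuine work has already been performed in Lemmas~\ref{lemma:polytope_property} and \ref{lemma:polytope_matching}, and this statement is essentially a one-line bookkeeping consequence of them.
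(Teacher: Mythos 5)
Your proposal is correct and is essentially the paper's own argument: the paper's proof is the one-line observation that, for each $i$, the matching $\mu_i$ from Lemma~\ref{lemma:polytope_matching} injects $S_i$ into $S_{3-i}$ (since $F_i \subseteq {\sf supp}$), giving $|S_i| \le |S_{3-i}|$ for both $i$. Your write-up just spells out the same counting in more detail.
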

\begin{proof}
For every integer $i \in \{1,2\}$, 
the existence of $\mu_i$ implies that 
$|S_i| \le |S_{3-i}|$. 
\end{proof} 

Lemma~\ref{lemma:polytope_size}
implies that
$\mu_1(v) \neq \emptyset$ if and only if
$\mu_2(v) \neq \emptyset$
for every vertex $v \in V$.
Lemma~\ref{lemma:polytope_property} implies that
$\mu_i(v) \succsim_v \mu_{3-i}(v)$
for every 
integer $i \in \{1,2\}$ and every vertex $v \in S_i$. 

\begin{lemma} \label{lemma:polytope_sum}
$x(E(v)) = 1$ for every vertex $v \in S_1 \cup S_2$. 
\end{lemma}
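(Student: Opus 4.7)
The plan is to exploit Lemma~\ref{lemma:polytope_property}, which already asserts $x(E(w)) = 1$ for the endpoint $w$ opposite to $v$ on any edge in ${\sf T}(v)$. So to conclude $x(E(v)) = 1$ for a given $v \in S_i$, it suffices to exhibit a vertex $u$ on the other side such that the edge $e = \{u,v\}$ lies in ${\sf T}(u)$; the natural candidate is $e := \mu_{3-i}(v)$, provided $\mu_{3-i}$ actually covers $v$.

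To obtain this saturation, I would argue as follows. Any edge $e \in F_{3-i}$ lies in ${\sf T}(u)$ for some $u \in S_{3-i}$, so $x(e) > 0$, i.e., $e \in {\sf supp}$; hence the $V_i$-endpoint of every edge of $F_{3-i}$ already belongs to $S_i$. The $V_i$-endpoints of the edges of $\mu_{3-i} \subseteq F_{3-i}$ therefore form a subset of $S_i$ of size $|\mu_{3-i}| = |S_{3-i}| = |S_i|$, where the last equality is Lemma~\ref{lemma:polytope_size}; this subset must equal $S_i$, and in particular $\mu_{3-i}$ covers $v$.

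Writing $e = \{u,v\} := \mu_{3-i}(v)$ with $u \in S_{3-i}$, I would then note that $e \in F_{3-i}$ forces $e \in {\sf T}(u)$: by the definition of $F_{3-i}$, there exists $u' \in S_{3-i}$ with $e \in {\sf T}(u')$, and since $u$ is the unique vertex of $V_{3-i}$ incident to $e$, we must have $u' = u$. Lemma~\ref{lemma:polytope_property} applied with $u$ playing the role of ``$v$'' and $v$ the role of ``$w$'' in that lemma now yields $x(E(v)) = 1$. The only non-routine step is the saturation argument; everything else is a direct appeal to the preceding lemma.
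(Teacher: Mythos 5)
Your proof is correct and takes essentially the same route as the paper, whose own proof merely states that the claim ``immediately follows'' from Lemmas~\ref{lemma:polytope_property} and \ref{lemma:polytope_size}. Your saturation argument (that $\mu_{3-i}$ covers all of $S_i$, so every $v \in S_i$ is the opposite endpoint of an edge in some ${\sf T}(u)$) is exactly the intended reasoning, just spelled out.
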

\begin{proof}
This lemma immediately follows from 
Lemmas~\ref{lemma:polytope_property} and 
\ref{lemma:polytope_size}.
\end{proof}

\begin{lemma} \label{lemma:polytope_e1}
For every integer $i \in \{1,2\}$ and 
every vertex $v \in S_i$, 
if $F_i(v) \cap E_1 \neq \emptyset$, then
we have $\mu_i(v) \in E_1$.
\end{lemma}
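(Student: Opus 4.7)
The plan is to argue by contradiction. Suppose $v \in S_i$ has some $e \in F_i(v) \cap E_1$ but $\mu_i(v) = f \in E_2$; necessarily $f \neq e$, so writing $e = \{v, w\}$ and $f = \{v, w_f\}$ we have $w \neq w_f$. Applying Lemma~\ref{lemma:polytope_property} (E1) to the pair $(v, e)$ gives ${\sf B}(w) = \{e\}$ — this single-element conclusion is the sharp feature that distinguishes $E_1$-edges from $E_2$-edges and will drive the contradiction.

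Next I identify $\mu_i(w)$. Because $e \in {\sf supp}(w)$, we have $w \in S_{3-i}$, so the equivalence noted immediately after Lemma~\ref{lemma:polytope_size} — namely, $\mu_1(u) \neq \emptyset$ if and only if $\mu_2(u) \neq \emptyset$ for every $u \in V$ — guarantees $\mu_i(w) \neq \emptyset$. Since $\mu_i$ already matches $v$ to $w_f \neq w$, the edge $\mu_i(w)$ cannot be $e$, so I write $g := \mu_i(w) = \{u, w\}$ with $u \in S_i$ and $u \neq v$. From $g \in \mu_i \subseteq F_i$ and the definition of $F_i$, we get $g \in {\sf T}(u)$.

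The contradiction now comes from a short case split on whether $g \in E_1$ or $g \in E_2$. In the former case, (E1) applied to $(u, g)$ forces ${\sf B}(w) = \{g\}$, which clashes with ${\sf B}(w) = \{e\}$ because $g \neq e$. In the latter case, (E2) applied to $(u, g)$ yields $g \in {\sf B}(w) = \{e\}$, again forcing $g = e$. Either branch contradicts our setup, completing the proof.

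The main obstacle is conceptual rather than computational: one must notice that the ``$=\{e\}$'' in (E1), as opposed to the mere ``$\in {\sf B}(w)$'' in (E2), is precisely what rules out any alternative matching partner at $w$. A small technical point to confirm is $\mu_i(w) \neq \emptyset$, which uses the paper's observation that the perfect matchings $\mu_1$ and $\mu_2$ cover the same vertex set — equivalently, $V_1(F_2) = S_1$ and $V_2(F_1) = S_2$.
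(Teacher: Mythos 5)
Your proof is correct and follows essentially the same route as the paper's: both hinge on Lemma~\ref{lemma:polytope_property}~(E1) forcing ${\sf B}(w)=\{e\}$ at the $V_{3-i}$-endpoint $w$ of the $E_1$-edge $e$, on $\mu_i(w)\neq\emptyset$ via Lemma~\ref{lemma:polytope_size}, and on the observation that any other $F_i$-edge at $w$ would also lie in ${\sf B}(w)$. The paper phrases this as directly establishing $F_i(w)=\{e\}$ while you run it as a contradiction on $\mu_i(v)\in E_2$, but the content is identical.
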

\begin{proof}
Let $i$ be an integer in $\{1,2\}$.
Let $v$ be a vertex in $S_i$ such that 
$F_i(v) \cap E_1 \neq \emptyset$, 
and let $e = \{v,w\}$ be an edge in 
$F_i(v) \cap E_1$. 
Lemma~\ref{lemma:polytope_size} implies that 
$\mu_i(w) \neq \emptyset$. 
Thus, 
it suffices to prove that $F_i(w) = \{e\}$. 
Lemma~\ref{lemma:polytope_property} implies that
${\sf B}(w) = \{e\}$. 
If there exists an edge $f \in F_i(w) \setminus \{e\}$, then
Lemma~\ref{lemma:polytope_property} implies that 
$f \in {\sf B}(w)$. 
This contradicts the fact that 
${\sf B}(w) = \{e\}$. 
\end{proof}

We now ready to prove Lemma~\ref{lemma:polytope_extreme_point}. 
Recall that Lemma~\ref{lemma:polytope_zero_one} implies that it is sufficient to 
prove that $x \in \{0,1\}^E$. 
For each vector $y \in \mathbb{R}_+^E$ and each edge $e \in E_1$, 
we define ${\sf LS}_1(e;y)$ by 
\begin{equation*}
{\sf LS}_1(e;y) := y(e) + \sum_{v \in e}y(E[\succ_v e]).
\end{equation*}
Define $\mathcal{E}_2$ as the set of pairs 
$(e,v)$ such that $e \in E_2$ and $v \in e$. 
For each vector $y \in \mathbb{R}_+^E$ 
and each element $(e,v) \in \mathcal{E}_2$,
we define ${\sf LS}_2(e,v;y)$ by 
\begin{equation*}
{\sf LS}_2(e,v;y) := y(E[\sim_v e]) + \sum_{w \in e}y(E[\succ_w e]). 
\end{equation*}
Define $E_1^{\ast}$ as the set of edges $e \in E_1$ 
such that ${\sf LS}_1(e;x) = 1$. 
Furthermore, we define $\mathcal{E}_2^{\ast}$ as the 
set of pairs 
$(e,v) \in \mathcal{E}_2$ such that 
${\sf LS}_2(e,v;x) = 1$. 
Define $\varepsilon_0, \varepsilon_1, \varepsilon_2$ by 
\begin{equation*}
\varepsilon_0 := 
\min_{e \in {\sf supp}}x(e), 
\ \ \ 
\varepsilon_1 := \min_{e \in E_1 \setminus E_1^{\ast}}{\sf LS}_1(e;x)-1, 
\ \ \ 
\varepsilon_2 := \min_{(e,v) \in \mathcal{E}_2 \setminus \mathcal{E}_2^{\ast}}
{\sf LS}_2(e,v;x)-1.
\end{equation*}
Define $\varepsilon := \min\{\varepsilon_0, \varepsilon_1,\varepsilon_2\}/2 > 0$.
Define 
$z_i := x + \varepsilon(\chi_{\mu_i} - \chi_{\mu_{3-i}})$
for each integer $i \in \{1,2\}$.

\begin{lemma} \label{lemma:polytope_feasibility} 
For every integer $i \in \{1,2\}$, 
$z_i \in {\bf S}$. 
\end{lemma}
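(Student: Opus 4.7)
The strategy is direct verification of non-negativity and of \eqref{eq_1:constraint}, \eqref{eq_2:constraint}, \eqref{eq_3:constraint} for $z_i$; the choice $\varepsilon = \min\{\varepsilon_0,\varepsilon_1,\varepsilon_2\}/2$ is tailored so that slack constraints absorb the perturbation automatically, leaving only the tight constraints to be argued combinatorially.

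Non-negativity is immediate: if $e \in \mu_{3-i}$, then $e \in F_{3-i} \subseteq {\sf supp}$, so $x(e) \ge \varepsilon_0 \ge 2\varepsilon$ and $z_i(e) \ge \varepsilon > 0$; otherwise $z_i(e) \ge x(e) \ge 0$. For \eqref{eq_1:constraint}, the remark right after Lemma~\ref{lemma:polytope_size} states that $\mu_1$ and $\mu_2$ saturate exactly the same vertex set, so $\chi_{\mu_i}(E(v)) = \chi_{\mu_{3-i}}(E(v))$ and hence $z_i(E(v)) = x(E(v)) \le 1$ for every $v \in V$. For the blocking constraints, define
\begin{equation*}
A_1(e;\mu) := \chi_\mu(e) + \sum_{u \in e}\chi_\mu(E[\succ_u e]), \quad A_2(e,v;\mu) := \chi_\mu(E[\sim_v e]) + \sum_{u \in e}\chi_\mu(E[\succ_u e]),
\end{equation*}
so that ${\sf LS}_j(\cdot;z_i) - {\sf LS}_j(\cdot;x) = \varepsilon\bigl(A_j(\cdot;\mu_i) - A_j(\cdot;\mu_{3-i})\bigr)$, and the fact that $\mu_1, \mu_2$ are matchings gives $A_j \in \{0,1,2\}$, so the perturbation has magnitude at most $2\varepsilon$. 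For a slack constraint (an edge outside $E_1^{\ast}$ or a pair outside $\mathcal{E}_2^{\ast}$), the choices $\varepsilon \le \varepsilon_1/2$ and $\varepsilon \le \varepsilon_2/2$ yield ${\sf LS}_j \ge 1$ at $z_i$ immediately.

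The substantive work is the \textbf{tight case}, where one must show $A_j(\cdot;\mu_i) \ge A_j(\cdot;\mu_{3-i})$. I plan a case analysis on the membership of $e$ in $\mu_i$ and $\mu_{3-i}$, leveraging the dominance relations $\mu_i(u) \succsim_u \mu_{3-i}(u)$ for $u \in S_i$ and the reverse for $u \in S_{3-i}$ (noted right after Lemma~\ref{lemma:polytope_size}). These dominances force the per-vertex contribution to $A_j$ at the $V_i$-endpoint of $e$ to be nondecreasing when switching from $\mu_{3-i}$ to $\mu_i$, while at the $V_{3-i}$-endpoint it may only decrease; the edge-term itself contributes $+1$, $0$, or $-1$. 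The \textbf{main obstacle} is ruling out the few residual configurations that could yield a net change of $-1$, namely $e \in \mu_{3-i}\setminus\mu_i$ with $\mu_i(v) \sim_v e$, and $e \notin \mu_i \cup \mu_{3-i}$ with $\mu_{3-i}(w) \succ_w e$ while $\mu_i(w) \not\succ_w e$. In each such configuration, the indifference at the relevant endpoint forces $e \in {\sf T}(\cdot)$, whereupon Lemma~\ref{lemma:polytope_property}~(E1) yields ${\sf B}(w) = \{e\}$ and $x(E(w)) = 1$, so ${\sf supp}(w) = \{e\}$, and hence $\mu_i(w) = e$ by the saturation property from Lemma~\ref{lemma:polytope_size}, contradicting the configuration assumed. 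The argument for \eqref{eq_3:constraint} is fully analogous, using (E2) and the inequality $x(E[\sim_v e]) \ge x(E[\sim_w e])$ from the same lemma in place of (E1).
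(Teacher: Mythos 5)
Your overall strategy is exactly the paper's: perturb along $\chi_{\mu_i}-\chi_{\mu_{3-i}}$, let $\varepsilon$ absorb the slack constraints, and handle the tight constraints by a case analysis on the membership of $e$ in $\mu_1,\mu_2$ using the dominance relations and Lemma~\ref{lemma:polytope_property}. The outline is sound, but two steps in your sketched resolution of the residual configurations are off. First, ${\sf B}(w)=\{e\}$ together with $x(E(w))=1$ does not imply ${\sf supp}(w)=\{e\}$ (the support at $w$ may contain edges strictly above $e$); what you actually need, and what does follow, is $\mu_i(w)\in{\sf B}(w)=\{e\}$, since every edge of $\mu_i$ lies in ${\sf B}(u)$ at its $V_{3-i}$-endpoint $u$. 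Second, and more substantively, Lemma~\ref{lemma:polytope_property} only applies to edges in ${\sf supp}$, so for a tight edge $e\notin\mu_1\cup\mu_2$ with $x(e)=0$ you cannot invoke (E1) to get ${\sf B}(w)=\{e\}$; in that configuration the contradiction must instead come from the tightness equation ${\sf LS}_1(e;x)=1$ combined with $x(E(w))\le 1$ and Lemma~\ref{lemma:polytope_sum}, which forces ${\sf supp}(w)\subseteq\{e\}\cup E[\succ_w e]$ and hence $\mu_i(w)\succ_w e$ (or, in the other branch, forces $x(E[\succ_w e])=0$, contradicting $\mu_{3-i}(w)\succ_w e$). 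This is precisely how the paper closes these cases, and with that substitution your argument goes through; the treatment of \eqref{eq_3:constraint} via (E2) is analogous as you say, bearing the same caveat in mind.
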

\begin{proof}
Let $i$ be an integer in $\{1,2\}$.
Since $\varepsilon < \varepsilon_0$, 
we have $z_i \in \mathbb{R}_+^E$.
Thus, in what follows, we prove that 
$z_i$ satisfies \eqref{eq_1:constraint}, 
\eqref{eq_2:constraint}, and \eqref{eq_3:constraint}. 

\eqref{eq_1:constraint}
Since $x(E(v)) = z_i(E(v))$ for every vertex $v \in V$ and 
$x$ satisfies \eqref{eq_1:constraint}, 
$z_i$ satisfies \eqref{eq_1:constraint}. 

\eqref{eq_2:constraint}
Let $e = \{v,w\}$ be an edge in $E_1$
such that $e \cap V_i = \{v\}$.
If $e \in E_1 \setminus E_1^{\ast}$, then 
\begin{equation*}
{\sf LS}_1(e;z_i) 
\ge {\sf LS}_1(e;x) - 2 \varepsilon 
\ge {\sf LS}_1(e;x) - \varepsilon_1
\ge {\sf LS}_1(e;x) - ({\sf LS}_1(e;x)-1)=1.
\end{equation*}
Thus, we assume that $e \in E^{\ast}_1$. 
Since ${\sf LS}_1(e;x) \ge 1$, it is sufficient to prove that 
\begin{equation} \label{eq_1:lemma:polytope_feasibility}
{\sf LS}_1(e;z_i) \ge {\sf LS}_1(e;x).
\end{equation}

If $e \in \mu_i$, then
$e \succsim \mu_{3-i}(v)$. 
Thus, \eqref{eq_1:lemma:polytope_feasibility} holds. 
Assume that $e \in \mu_{3-i} \setminus \mu_i$. 
Then $e = \mu_{3-i}(v)$.
If $\mu_i(v) \succ_v e$, then 
\eqref{eq_1:lemma:polytope_feasibility} holds.
Thus, we assume that 
$\mu_i(v) \sim_v e$. 
Since $e \notin \mu_i$, 
$e \neq \mu_i(v)$. 
Since $\mu_{3-i}(v) \in {\sf B}(v)$, 
$\mu_i(v) \in {\sf B}(v)$. 
Lemma~\ref{lemma:polytope_property} implies that 
${\sf B}(v) = \{\mu_{3-i}(v)\}$. 
This is a contradiction. 

We consider the case where 
$e \notin \mu_1 \cup \mu_2$.
First, we assume that $\mu_i(v) \succ_v e$. 
If  
$e \succsim_v \mu_{3-i}(v)$, 
then 
\eqref{eq_1:lemma:polytope_feasibility} holds. 
If $\mu_{3-i}(v) \succ_v e$, then 
since $\mu_{3-i}(v) \in {\sf B}(v)$, 
Lemma~\ref{lemma:polytope_sum} implies that 
$x(E[\succ_v e]) = 1$. 
Thus, since $e \in E^{\ast}_1$, we have 
$x(e) = 0$ and 
$x(E[\succ_w e]) = 0$. 
This implies that 
$e \succsim_w \mu_{3-i}(w)$, and thus  
\eqref{eq_1:lemma:polytope_feasibility} holds.
Next, we assume that 
$e \succsim_v \mu_i(v)$. 
Then since $\mu_i(v) \in {\sf T}(v)$, 
$x(E[\succ_v e]) = 0$. 
Thus, we have $x(e) + x(E[\succ_w e]) = 1$. 
This implies that 
since $\mu_i(w) \in {\sf B}(w)$ and 
$e \neq \mu_i(w)$, 
we have 
$\mu_i(w) \succ_w e$. 
Thus, \eqref{eq_1:lemma:polytope_feasibility} holds.

\eqref{eq_3:constraint}
Let $e = \{v,w\}$ be an edge in $E_2$
such that $e \cap V_i = \{v\}$.

First, we consider \eqref{eq_3:constraint} for $(e,v)$. 
If $(e,v) \in \mathcal{E}_2 \setminus \mathcal{E}_2^{\ast}$, then 
\begin{equation*}
{\sf LS}_2(e,v;z_i) 
\ge {\sf LS}_2(e,v;x) - 2 \varepsilon 
\ge {\sf LS}_2(e,v;x) - \varepsilon_2
\ge {\sf LS}_2(e,v;x) - ({\sf LS}_2(e,v;x)-1)=1.
\end{equation*}
Thus, we assume that 
$(e,v) \in \mathcal{E}_2^{\ast}$. 
Since ${\sf LS}_2(e,v;x) \ge 1$, 
it is sufficient to prove that
\begin{equation} \label{eq_2:lemma:polytope_feasibility}
{\sf LS}_2(e,v;z_i) \ge {\sf LS}_2(e,v;x).
\end{equation}

If $e \in \mu_{3-i}$, then 
$\mu_i(v) \succsim_v e$, and 
thus \eqref{eq_2:lemma:polytope_feasibility} holds.
Assume that $e \in \mu_i \setminus \mu_{3-i}$. 
If $e \succ_v \mu_{3-i}(v)$, 
then \eqref{eq_2:lemma:polytope_feasibility} holds. 
Assume that $e \sim_v \mu_{3-i}(v)$. 
Then $\mu_{3-i}(v) \in {\sf T}(v)$.
Thus, since $\mu_{3-i}(v) \in {\sf B}(v)$, 
Lemma~\ref{lemma:polytope_sum} implies that 
$x(E[\sim_v e]) = 1$. 
This 
implies that 
since $(e,v) \in \mathcal{E}_2^{\ast}$, $x(E[\succ_w e]) = 0$. 
Thus, 
$e \succsim_w \mu_{3-i}(w)$. 
Since $e \neq \mu_{3-i}(w)$ follows from 
$e \neq \mu_{3-i}(v)$, 
\eqref{eq_2:lemma:polytope_feasibility} holds.

We consider the case where 
$e \notin \mu_1 \cup \mu_2$. 
First, we assume that 
$\mu_i(v) \succsim_v e$. 
If $e \succ_v \mu_{3-i}(v)$, then 
\eqref{eq_2:lemma:polytope_feasibility} holds.
If $\mu_{3-i}(v) \succsim_v e$, then 
since $\mu_{3-i}(v) \in {\sf B}(v)$, 
Lemma~\ref{lemma:polytope_sum} implies that 
$x(E[\succsim_v e]) = 1$. 
Since $(e,v) \in \mathcal{E}^{\ast}_2$,  
$x(E[\succ_w e]) = 0$. 
This implies that 
$e \succsim_w \mu_{3-i}(w)$. 
Thus, 
since $e \neq \mu_{3-i}(w)$, 
\eqref{eq_2:lemma:polytope_feasibility} holds.
Next, we assume that 
$e \succ_v \mu_i(v)$.
Since $\mu_i(v) \in {\sf T}(v)$,
we have 
$x(E[\succsim_v e]) = 0$. 
This implies that $x(E[\succ_w e]) = 1$. 
Thus, 
$\mu_i(w) \succ_w e$
and \eqref{eq_2:lemma:polytope_feasibility} holds.

Next, we consider \eqref{eq_3:constraint} for $(e,w)$. 
If $(e,w) \in \mathcal{E}_2 \setminus \mathcal{E}_2^{\ast}$,
then 
\begin{equation*}
{\sf LS}_2(e,w;z_i) 
\ge {\sf LS}_2(e,w;x) - 2 \varepsilon 
\ge {\sf LS}_2(e,w;x) - \varepsilon_2
\ge {\sf LS}_2(e,w;x) - ({\sf LS}_2(e,w;x)-1)=1.
\end{equation*}
Thus, we assume that 
$(e,w) \in \mathcal{E}_2^{\ast}$. 
Since ${\sf LS}_2(e,w;x) \ge 1$, 
it is sufficient to prove that
\begin{equation} \label{eq_3:lemma:polytope_feasibility}
{\sf LS}_2(e,w;z_i) \ge {\sf LS}_2(e,w;x).
\end{equation}

If $e \in \mu_i$, then since 
$e \succsim_v \mu_{3-i}(v)$, 
\eqref{eq_3:lemma:polytope_feasibility} holds. 
Assume that 
$e \in \mu_{3-i} \setminus \mu_i$. 
If $\mu_i(v) \succ_v \mu_{3-i}(v)$, then 
\eqref{eq_3:lemma:polytope_feasibility} holds. 
If $\mu_i(v) \sim_v \mu_{3-i}(v)$, then 
it follows from Lemma~\ref{lemma:polytope_sum} that 
$x(E[\sim_v e]) = 1$. 
Thus, we have $x(E[\succ_v e]) = 0$.
This implies that 
$x(E[\succsim_w e]) = 1$.
Thus, $\mu_i(w) \succsim_w e$
and 
\eqref{eq_3:lemma:polytope_feasibility} holds. 

We consider the case where 
$e \notin \mu_1 \cup \mu_2$.
First, we assume that 
$\mu_i(v) \succ_v e$. 
If $e \succsim_v \mu_{3-i}(v)$, then  
\eqref{eq_3:lemma:polytope_feasibility} holds. 
Thus, we assume that $\mu_{3-i}(v) \succ_v e$. 
Since $\mu_{3-i}(v) \in {\sf B}(v)$, 
Lemma~\ref{lemma:polytope_sum} implies that 
$x(E[\succ_v e]) = 1$. 
Thus, since $(e,v) \in \mathcal{E}^{\ast}_2$,  
we have $x(E[\succsim_w e]) = 0$. 
Thus,
$e \succ_w \mu_{3-i}(w)$ and 
\eqref{eq_3:lemma:polytope_feasibility} holds. 
Next, we assume that 
$e \succsim_v \mu_i(v)$.
Since $\mu_i(v) \in {\sf T}(v)$, 
$x(E[\succ_v e]) = 0$. 
This implies that $x(E[\succsim_w e]) = 1$. 
Thus, 
$\mu_i(w) \succsim_w e$ and 
\eqref{eq_3:lemma:polytope_feasibility} holds. 
\end{proof}

We are now ready to prove that 
$x \in \{0,1\}^E$. 
Since $x$ is an extreme point of ${\bf S}$, 
Lemma~\ref{lemma:polytope_feasibility} implies that 
we have $\mu_1 = \mu_2$. 
Define $\mu := \mu_1$. 
Define $W_1$ as the set of vertices $v \in S_1 \cup S_2$ such that 
$\mu(v) \in E_1$. 
Define $W_2 := (S_1 \cup S_2) \setminus W_1$. 
That is, $W_2$ is the set of vertices $v \in S_1 \cup S_2$ such that 
$\mu(v) \in E_2$. 
In what follows, we prove that
${\sf supp}(v) = \{\mu(v)\}$ for every 
vertex $v \in S_1 \cup S_2$. 
If we can prove this, then, for every vertex $v \in S_1 \cup S_2$, 
Lemma~\ref{lemma:polytope_sum} implies that 
$x(\mu(v)) = 1$ and 
$x(e) = 0$ for every edge $e \in E(v) \setminus \{\mu(v)\}$.
This completes the proof. 

\begin{claim} \label{claim:polytope_w1_property}
Let $v$ be a vertex in $W_1$. 
Then ${\sf supp}(v) = \{\mu(v)\}$.
\end{claim}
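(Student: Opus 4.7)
The plan is to fix an arbitrary $v \in W_1$, set $e := \mu(v)$, let $u$ denote the other endpoint of $e$, and deduce ${\sf supp}(v) = \{e\}$ by applying Lemma~\ref{lemma:polytope_property}(E1) at the pair $(u,e)$, that is, from the \emph{other} endpoint of the matching edge. The content of the claim is essentially that when $v$'s $\mu$-partner is an $E_1$-edge, $v$ has no ``slack'' left in $x$.

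The preparation step is to verify $e \in {\sf T}(v) \cap {\sf T}(u)$. Since $G$ is bipartite, each edge has a unique endpoint in $V_j$, so for $x \in S_j$ ($j \in \{1,2\}$) every edge of $F_j$ incident to $x$ must already lie in ${\sf T}(x)$; i.e., $F_j(x) = {\sf T}(x)$. Because $\mu = \mu_1 = \mu_2$ and $e \in \mu$, the $V_j$-endpoint of $e$ is matched by $\mu_j$ to $e$, so $e \in F_j$ at that endpoint, and the identification forces $e$ to lie in ${\sf T}$ there. Doing this for both $j = 1$ and $j = 2$ gives $e \in {\sf T}(v) \cap {\sf T}(u)$; moreover $e \in E_1$ by the definition of $W_1$.

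The second step is a single invocation of Lemma~\ref{lemma:polytope_property}(E1): with $e \in {\sf T}(u) \cap E_1$, the lemma's ``other endpoint'' is $v$ itself, and its conclusion is ${\sf B}(v) = \{e\}$. Combined with $e \in {\sf T}(v)$, any $f \in {\sf supp}(v)$ must satisfy both $e \succsim_v f$ (from $e \in {\sf T}(v)$) and $f \succsim_v e$ (from $e \in {\sf B}(v)$), so $f \sim_v e$. Transitivity of $\succsim_v$ then places $f$ in ${\sf B}(v) = \{e\}$, forcing $f = e$. Hence ${\sf supp}(v) = \{\mu(v)\}$, as desired.

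The only point requiring care is the identification $F_j(x) = {\sf T}(x)$ together with the observation that $\mu_1 = \mu_2$ places $e$ in ${\sf T}$ at \emph{both} endpoints of the matching edge; once these are in hand, (E1) can be fired from the opposite side and the claim drops out from a short order-theoretic squeeze, so I do not expect any real obstacle.
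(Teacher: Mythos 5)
Your proof is correct and follows essentially the same route as the paper: it establishes $\mu(v)\in{\sf T}(v)$ via the identification $F_j(x)={\sf T}(x)$ and $\mu_1=\mu_2$, obtains ${\sf B}(v)=\{\mu(v)\}$ by firing Lemma~\ref{lemma:polytope_property}(E1) from the opposite endpoint, and concludes by the transitivity squeeze. You have merely spelled out the steps that the paper's two-line proof leaves implicit.
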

\begin{proof}
Since $\mu(v) \in {\sf T}(v)$ and 
$\mu(v) \in {\sf B}(v)$, we have 
${\sf supp}(v) \subseteq {\sf B}(v)$. 
Thus, since ${\sf B}(v) = \{\mu(v)\}$, we have 
${\sf supp}(v) = \{\mu(v)\}$. 
This completes the proof. 
\end{proof} 

For every vertex $v \in W_2$, 
Lemma~\ref{lemma:polytope_property} implies that 
$\mu(v) \in {\sf T}(v) \cap {\sf B}(v)$.
Thus, for every vertex $v \in W_2$, 
we have ${\sf supp}(v) = {\sf T}(v)$.
For every vertex $v \in W_2$, 
since $\mu(v) \in E_2$, 
Lemma~\ref{lemma:polytope_e1} implies that 
${\sf supp}(v) \subseteq E_2$. 
Let $W^{\prime}_2$ be the set of vertices $v \in W_2$ such that 
$|{\sf supp}(v)| \ge 2$. 

\begin{claim} \label{claim:polytope_w2_property}
For every vertex $v \in W_2^{\prime}$ and 
every edge $e = \{v,w\} \in {\sf supp}(v)$, 
we have $w \in W_2^{\prime}$. 
\end{claim}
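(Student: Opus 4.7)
The plan is to fix an arbitrary $v \in W_2^\prime$ and edge $e = \{v,w\} \in \mathsf{supp}(v)$, and then verify the two conditions that define membership of $w$ in $W_2^\prime$: namely $w \in W_2$ (which unfolds to $w \in S_1 \cup S_2$ together with $\mu(w) \in E_2$), and $|\mathsf{supp}(w)| \ge 2$.

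First I would invoke the two observations recorded for every $u \in W_2$ just before the claim, namely $\mathsf{supp}(u) = \mathsf{T}(u)$ (from Lemma~\ref{lemma:polytope_property}) and $\mathsf{supp}(u) \subseteq E_2$ (from Lemma~\ref{lemma:polytope_e1}). Applied at $v$ these give $e \in \mathsf{T}(v)$ and $e \in E_2$, so part (E2) of Lemma~\ref{lemma:polytope_property} yields both $e \in \mathsf{B}(w)$ (placing $w \in S_1 \cup S_2$) and $x(E(w)) = 1$.

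Next I would rule out $w \in W_1$: if it were, Claim~\ref{claim:polytope_w1_property} would force $\mathsf{supp}(w) = \{\mu(w)\}$ with $\mu(w) \in E_1$, and since $e \in \mathsf{B}(w) \subseteq \mathsf{supp}(w)$ this would give $e = \mu(w) \in E_1$, contradicting $e \in E_2$. Hence $w \in W_2$. To establish $|\mathsf{supp}(w)| \ge 2$, I would argue by contradiction: if $\mathsf{supp}(w) = \{e\}$, then Lemma~\ref{lemma:polytope_sum} at $w$ gives $x(e) = x(E(w)) = 1$, while Lemma~\ref{lemma:polytope_sum} at $v$ gives $x(E(v)) = 1$; together these force $x(f) = 0$ for every $f \in E(v) \setminus \{e\}$, so $\mathsf{supp}(v) = \{e\}$, contradicting $v \in W_2^\prime$.

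There is no serious obstacle beyond assembling the right ingredients. The key conceptual point is that the tightness $x(E(v)) = x(E(w)) = 1$ from Lemma~\ref{lemma:polytope_sum} propagates along any edge in the support: a single-edge support at $w$ would saturate the constraint at $v$ with the same edge and thereby collapse the support at $v$. Verifying that the two endpoints indeed lie in $S_1 \cup S_2$ is handled uniformly through (E2) of Lemma~\ref{lemma:polytope_property}, irrespective of which side of the bipartition $v$ belongs to.
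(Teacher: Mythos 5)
Your proposal is correct and follows essentially the same route as the paper's proof: rule out $w \in W_1$ via Claim~\ref{claim:polytope_w1_property} (you reach the contradiction through $e \in E_2$ rather than through $\mu(v) \in E_1$ contradicting $v \in W_2$, which is the same obstruction), and then use the tightness $x(E(v)) = x(E(w)) = 1$ from Lemma~\ref{lemma:polytope_sum} to show that $x(e) < 1$ forces a second support edge at $w$. The only difference is that you phrase this last step contrapositively; no gap.
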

\begin{proof}
First, we prove that $w \in W_2$.
If $w \in W_1$, then Claim~\ref{claim:polytope_w1_property} implies that 
${\sf supp}(w) = \{\mu(w)\}$. 
Thus, since $e \in {\sf supp}(w)$, 
$\mu(v) = e = \mu(w) \in E_1$.
This contradicts the fact that $v \in W_2$. 

Since $v \in W_2^{\prime}$, $x(e) < 1$. 
Thus, Lemma~\ref{lemma:polytope_sum} implies that 
$w \in W_2^{\prime}$. 
\end{proof} 

The following claim implies that 
${\sf supp}(v) = \{\mu(v)\}$ for every 
vertex $v \in W_2$.

\begin{claim} \label{claim:polytope_w2_empty}
$W^{\prime}_2 = \emptyset$. 
\end{claim}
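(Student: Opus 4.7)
The plan is to argue by contradiction. Assuming $W_2^{\prime} \neq \emptyset$, I will produce a nonzero direction $d \in \mathbb{R}^E$ such that $x \pm \varepsilon d \in {\bf S}$ for all sufficiently small $\varepsilon > 0$, contradicting the extremality of $x$. The combinatorial input is a cycle obtained as follows: consider the bipartite graph $H$ with vertex set $W_2^{\prime}$ and edge set $\{e \in {\sf supp} : e \subseteq W_2^{\prime}\}$. Each $v \in W_2^{\prime}$ has $|{\sf supp}(v)| \geq 2$ by definition, and Claim~\ref{claim:polytope_w2_property} guarantees that every edge of ${\sf supp}(v)$ remains in $H$. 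Hence $H$ has minimum degree at least two, so it contains an even cycle $C$. Let $M_1, M_2$ be the two matchings formed by alternate edges of $C$, and set $d := \chi_{M_1} - \chi_{M_2}$.

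To check that $x \pm \varepsilon d \in {\bf S}$ for small $\varepsilon > 0$, I verify that each of \eqref{eq_1:constraint}, \eqref{eq_2:constraint}, \eqref{eq_3:constraint} is preserved. Non-negativity and \eqref{eq_1:constraint} are easy: every edge of $C$ lies in ${\sf supp}$, so $x$ is strictly positive there, and each vertex of $V(C)$ meets $M_1$ and $M_2$ in exactly one edge each, making $x(E(v))$ invariant. Furthermore, every edge of $C$ lies in $E_2$, since ${\sf supp}(v) \subseteq E_2$ for $v \in W_2^{\prime} \subseteq W_2$; in particular $d(f) = 0$ for every $f \in E_1$, so the term $x(f)$ in \eqref{eq_2:constraint} is unaffected.

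The heart of the argument is the observation that, for every $v \in V(C)$, the two edges of $C$ incident to $v$ are indifferent at $v$. Indeed, both lie in ${\sf supp}(v)$, and for $v \in W_2$ the identity ${\sf supp}(v) = {\sf T}(v)$ (noted just before Claim~\ref{claim:polytope_w2_property}) forces all edges of ${\sf supp}(v)$ to be mutually indifferent at $v$. Consequently, for any reference edge $f$ and any $v \in V$, each of the preference-level sets $E[\succ_v f]$, $E[\sim_v f]$, $E[\succsim_v f]$ contains either both or neither of the two $C$-edges incident to $v$, so $d$ sums to zero over each such set. This makes the left-hand sides of \eqref{eq_2:constraint} and \eqref{eq_3:constraint} invariant under the perturbation. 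I expect the main delicate step to be isolating this indifference fact cleanly enough that it kills every preference-level sum at once; beyond that, the argument is a standard cycle-based perturbation of an extreme point.
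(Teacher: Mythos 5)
Your proposal is correct and follows essentially the same route as the paper: extract an alternating cycle inside ${\sf supp}$ restricted to $W_2^{\prime}$ (justified by Claim~\ref{claim:polytope_w2_property} and $|{\sf supp}(v)|\ge 2$), perturb $x$ along the two alternating edge classes, and use ${\sf supp}(v)={\sf T}(v)$ (mutual indifference) together with ${\sf supp}(v)\subseteq E_2$ to see that every term in \eqref{eq_1:constraint}, \eqref{eq_2:constraint}, \eqref{eq_3:constraint} is unchanged, contradicting extremality. The paper writes the two perturbed points $z_1,z_2$ explicitly with $x=(z_1+z_2)/2$ rather than $x\pm\varepsilon d$, but this is only a cosmetic difference.
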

\begin{proof}
Assume that $W^{\prime}_2 \neq \emptyset$. 
Claim~\ref{claim:polytope_w2_property}
implies that 
there exists a cycle $(v_1,v_2,\dots,v_{2k+1})$ 
in ${\sf supp}$.  
Notice that $v_{\ell} \in W_2^{\prime}$ and 
$\{v_{\ell},v_{\ell+1}\} \in E_2$ for every 
integer $\ell \in [2k]$. 
For each integer $\ell \in [k]$, we define 
$f_{\ell}^1 := \{v_{2\ell-1},v_{2\ell}\}$
and
$f_{\ell}^2 := \{v_{2\ell},v_{2\ell+1}\}$. 
Define 
\begin{equation*}
\delta_1 := \min_{\ell \in [k]}x(f^1_{\ell}), \ \ \ \ 
\delta_2 := \min_{\ell \in [k]}x(f^2_{\ell}), \ \ \ \ 
\delta := \min\{\delta_1, \delta_2\} > 0.
\end{equation*}
For each integer $i \in \{1,2\}$, 
we define $z_i \in \mathbb{R}_+^E$ as follows. 
\begin{equation*}
z_i(e) := 
\begin{cases}
x(e) + \delta & \mbox{if $e = f^i_{\ell}$ for some integer $\ell \in [k]$} \\
x(e) - \delta & \mbox{if $e = f^{3-i}_{\ell}$ for some integer $\ell \in [k]$} \\
x(e) & \mbox{otherwise}.
\end{cases}
\end{equation*}
If we can prove that $z_1,z_2 \in {\bf S}$, then since 
$x = (z_1 + z_2)/2$, 
this contradicts the fact that $x$ is an extreme point of ${\bf S}$.
Thus, the proof is done. 
In what follows, 
let $i$ be an integer in $\{1,2\}$, and 
we prove that 
$z_i \in {\bf S}$.

Since ${\sf supp}(v) = {\sf T}(v)$ 
for every vertex $v \in W_2$, 
$x(E[\sim_v e]) = z_i(E[\sim_v e])$ 
and 
$x(E[\succ_v e]) = z_i(E[\succ_v e])$ 
for every edge $e \in E$ and every vertex $v \in e$. Furthermore, 
since ${\sf supp}(v) \subseteq E_2$ for every vertex $v \in W_2$, 
$x(e) = z_i(e)$ for every edge $e \in E_1$. 
Thus, since $x \in {\bf S}$, $z_i \in {\bf S}$. 
\end{proof} 

\section{Structure} 
\label{section:structure} 

In this section, we first consider the set of
vertices  
covered by a non-uniformly stable matching. 
Next, we prove that 
the set of non-uniformly stable matchings forms a 
distributive lattice. 

In the proofs of the following results, we do not use 
the information about which of $E_1$ and $E_2$ 
an edge in $E$ belongs to. 
We only use the fact that, for every matching $\mu$ in $G$, 
if an edge $e \in E \setminus \mu$ strongly blocks $\mu$, then 
$e$ weakly blocks $\mu$.
Thus, the following proofs are basically 
the same as the proofs of the results 
for strong stability~\cite{Manlove99,Manlove02}. 
For completeness, 
we give the full proofs of the following results. 

\begin{theorem} \label{theorem:structure_cover}
For every pair of non-uniformly stable matchings $\mu_1,\mu_2$ in $G$, 
$V(\mu_1) = V(\mu_2)$. 
\end{theorem}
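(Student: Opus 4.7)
The plan is to argue by contradiction using the classical alternating-path analysis of the symmetric difference $\mu_1 \triangle \mu_2$. Suppose $V(\mu_1) \neq V(\mu_2)$ and, without loss of generality, pick $v_0 \in V(\mu_1) \setminus V(\mu_2)$. Since each of $\mu_1, \mu_2$ is a matching, every connected component of $\mu_1 \triangle \mu_2$ is either an alternating path or an alternating cycle, and the component of $v_0$ must be a path with $v_0$ as an endpoint (because $v_0$ is incident to only the $\mu_1$-edge and no $\mu_2$-edge). Write this path as $(v_0, v_1, \ldots, v_k)$ with edges $e_i := \{v_{i-1}, v_i\}$, so $e_i \in \mu_1 \setminus \mu_2$ for odd $i$ and $e_i \in \mu_2 \setminus \mu_1$ for even $i$; a routine check at the other endpoint shows that the matching not containing $e_k$ necessarily leaves $v_k$ unmatched.

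The heart of the argument is an induction showing that $e_{i+1} \succ_{v_i} e_i$ for every $i \in [k-1]$. Let $i$ be the smallest index at which this fails, so that $e_i \succsim_{v_i} e_{i+1}$ by totality of $\succsim_{v_i}$, and let $\mu \in \{\mu_1, \mu_2\}$ be the matching not containing $e_i$. Then $\mu(v_i) = e_{i+1}$, giving $e_i \succsim_{v_i} \mu(v_i)$. At the other endpoint $v_{i-1}$ of $e_i$: if $i = 1$ then $\mu = \mu_2$ with $\mu(v_0) = \emptyset$, so the axiom $\emptyset \not\succsim_{v_0} e_1$ yields $e_1 \succ_{v_0} \mu(v_0)$; if $i \geq 2$ then the inductive hypothesis gives $e_i \succ_{v_{i-1}} e_{i-1} = \mu(v_{i-1})$. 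Either way, $e_i$ is weakly non-worse than $\mu$ at both of its endpoints and strictly preferred at one, so $e_i$ strongly blocks $\mu$. Since strong blocking implies weak blocking, $e_i$ blocks $\mu$ regardless of whether $e_i \in E_1$ or $e_i \in E_2$, contradicting the non-uniform stability of $\mu$.

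The final contradiction comes from the terminal edge $e_k$: taking $\mu$ to be the matching that does not contain $e_k$, we have $\mu(v_k) = \emptyset$ and hence $e_k \succ_{v_k} \mu(v_k)$, while at $v_{k-1}$ either the established monotonicity, or (when $k = 1$) the same base-case argument, supplies the required $\succsim$ with strictness. Therefore $e_k$ strongly blocks $\mu$, contradicting non-uniform stability. As the passage preceding the theorem anticipates, the argument never splits into the $E_1$ and $E_2$ cases; the only property of non-uniform blocking that enters is that strong blocking implies weak blocking. The only mildly delicate step is the base of the induction at $v_0$, where one must extract a strict preference $e_1 \succ_{v_0} \emptyset$ from the axiom $\emptyset \not\succsim_{v_0} e_1$ so that the blocking at $v_0$ is genuinely strong.
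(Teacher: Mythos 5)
Your proof is correct and follows essentially the same route as the paper's: both take the symmetric difference, extract the alternating path starting at a vertex covered by exactly one of the two matchings, establish by induction that preferences strictly increase along the path (using only that a strong block implies a weak block, so the $E_1$/$E_2$ distinction never enters), and derive the contradiction at the terminal edge. Your write-up merely makes explicit the induction and the base case at the unmatched endpoint, which the paper states more tersely.
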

\begin{proof}
Assume that there exists a pair of non-uniformly stable matchings $\mu_1,\mu_2$ in $G$
such that $V(\mu_1) \neq V(\mu_2)$. 
Define $F := (\mu_1 \setminus \mu_2) \cup (\mu_2 \setminus \mu_1)$. 
Since $V(\mu_1) \neq V(\mu_2)$, 
there exists a vertex $v_1 \in V$ such that 
$|F(v_1)| = 1$.
Assume that $F(v_1) = \{e_1\}$. 
Since $V$ is finite, 
there exists a path $(v_1,v_2,\dots,v_k)$ in $F$
such that 
(i) $|F(v_k)| = 1$, and 
(ii) 
$|F(v_{\ell+1}) \cap \mu_1| = |F(v_{\ell+1}) \cap \mu_2| = 1$
for every integer $\ell \in [k-2]$. 
For each integer $\ell \in [k-1]$, we define 
$e_{\ell} := \{v_{\ell},v_{\ell+1}\}$. 
Since $\mu_1,\mu_2$ are non-uniformly stable, 
$e_{\ell+1} \succ_{v_{\ell+1}} e_{\ell}$ 
for every integer $\ell \in [k-2]$.
Assume that $e_{k-1} \in \mu_i$.
Then since $\mu_{3-i}(v_k) = \emptyset$ and 
$\mu_i(v_{k-1}) \succ_{v_{k-1}} \mu_{3-i}(v_{k-1})$, 
$e_{k-1}$ blocks $\mu_{3-i}$. 
However, this contradicts the fact that 
$\mu_{3-i}$ is non-uniformly stable. 
This completes the proof. 
\end{proof} 

Next, we prove that 
the set of non-uniformly stable matchings forms a 
distributive lattice. 
Define $\mathcal{S}$ as the set of non-uniformly stable 
matchings in $G$.
For each pair of elements $\mu, \sigma \in \mathcal{S}$, 
we define the subsets 
$\mu \wedge \sigma, \mu \vee \sigma \subseteq \mu \cup \sigma$ by 
\begin{equation*}
(\mu \wedge \sigma)(v) := 
\begin{cases}
\mu(v) & \mbox{if $\mu(v) \succsim_v \sigma(v)$}\\ 
\sigma(v) & \mbox{if $\sigma(v) \succ_v \mu(v)$},  
\end{cases}
\ \ \ \ \ 
(\mu \vee \sigma)(v) := 
\begin{cases}
\mu(v) & \mbox{if $\sigma(v) \succsim_v \mu(v)$}\\ 
\sigma(v) & \mbox{if $\mu(v) \succ_v \sigma(v)$}
\end{cases}
\end{equation*} 
for each vertex $v \in V_1$. 

Let $\mu,\sigma$ be elements in $\mathcal{S}$.
Define $F := (\mu \setminus \sigma) \cup (\sigma \setminus \mu)$.
Then 
Theorem~\ref{theorem:structure_cover} implies that 
there exists 
the set $\mathcal{C}(\mu,\sigma)$ of 
cycles in $F$ satisfying the following conditions. 
\begin{itemize}
\item
For every pair of elements $(v_1,v_2,\dots,v_{k+1}), (w_1,w_2,\ldots,w_{\ell+1}) \in \mathcal{C}(\mu,\sigma)$
and every pair of integers $i \in [k]$ and $j \in [\ell]$, 
we have $v_i \neq w_j$. 
\item 
For every edge $e \in F$, there exist an element $(v_1,v_2,\dots,v_{k+1}) \in \mathcal{C}(\mu,\sigma)$
and an integer $\ell \in [k]$ such that $e = \{v_{\ell},v_{\ell+1}\}$. 
\end{itemize} 
In other words, $\mathcal{C}(\mu,\sigma)$ is the set of 
cycles that exactly covers the edges in $F$. 

\begin{lemma} \label{lemma:structure_cycle} 
Let $\mu,\sigma$ be elements in $\mathcal{S}$.
Let $(v_1,v_2,\dots,v_{k+1})$ be an element in $\mathcal{C}(\mu,\sigma)$. 
Define $e_{\ell} := \{v_{\ell},v_{\ell+1}\}$ for each vertex $\ell \in [k]$ and 
$e_0 := e_k$.
Then one of the following statements holds.
\begin{description}
\item[(C1)]
$e_{\ell} \succ_{v_{\ell}} e_{\ell-1}$ for every integer $\ell \in [k]$. 
\item[(C2)]
$e_{\ell-1} \succ_{v_{\ell}} e_{\ell}$ for every integer $\ell \in [k]$. 
\item[(C3)]
$e_{\ell} \sim_{v_{\ell}} e_{\ell-1}$ for every integer $\ell \in [k]$.
\end{description}
\end{lemma}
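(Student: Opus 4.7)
The plan is to show that the relative preferences between consecutive edges around a cycle in $\mathcal{C}(\mu,\sigma)$ must be uniform, of one of the three listed types. The key preliminary observation is that since $\mathcal{C}(\mu,\sigma)$ exactly covers the edges of $F = (\mu\setminus\sigma)\cup(\sigma\setminus\mu)$, consecutive edges in the cycle alternate between $\mu$ and $\sigma$ (otherwise two edges incident to a common vertex would belong to the same matching). Hence, for each $\ell\in[k]$, we have $\{\mu(v_\ell),\sigma(v_\ell)\}=\{e_{\ell-1},e_\ell\}$.

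The main step is a propagation claim: if $e_\ell\succ_{v_\ell}e_{\ell-1}$ for some $\ell$, then $e_{\ell+1}\succ_{v_{\ell+1}}e_\ell$ (with indices read cyclically). I would prove this by contradiction. Assume $e_\ell\succsim_{v_{\ell+1}}e_{\ell+1}$, and without loss of generality assume $e_\ell\in\mu$, so that $\sigma(v_\ell)=e_{\ell-1}$ and $\sigma(v_{\ell+1})=e_{\ell+1}$. Then $e_\ell\succ_{v_\ell}\sigma(v_\ell)$ and $e_\ell\succsim_{v_{\ell+1}}\sigma(v_{\ell+1})$, so $e_\ell$ strongly blocks $\sigma$. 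Because strong blocking implies weak blocking, $e_\ell$ blocks $\sigma$ no matter whether $e_\ell\in E_1$ or $e_\ell\in E_2$, contradicting non-uniform stability of $\sigma$. The case $e_\ell\in\sigma$ is symmetric with the roles of $\mu$ and $\sigma$ swapped.

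Iterating this propagation around the cycle, if $e_\ell\succ_{v_\ell}e_{\ell-1}$ holds at any single vertex, it holds at every vertex, giving (C1). A fully symmetric argument, obtained by reversing the cyclic orientation, shows that if $e_{\ell-1}\succ_{v_\ell}e_\ell$ at some vertex then (C2) holds everywhere. If no strict preference occurs at any vertex, then by completeness of each $\succsim_{v_\ell}$ we must have $e_\ell\sim_{v_\ell}e_{\ell-1}$ at every $\ell$, which is (C3). I do not expect a genuine obstacle in this proof; the only subtle point is the symmetry between the $e_\ell\in\mu$ and $e_\ell\in\sigma$ cases in the propagation step, and the observation (already noted by the authors) that non-uniform stability is violated whenever some edge strongly blocks, independently of its membership in $E_1$ or $E_2$.
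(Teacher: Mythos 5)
Your proof is correct and rests on exactly the same key mechanism as the paper's: if two consecutive comparisons both favor a common edge $e_{\ell}$ of the cycle, strictly at one endpoint, then $e_{\ell}$ strongly blocks (hence blocks, regardless of whether it lies in $E_1$ or $E_2$) the matching among $\mu,\sigma$ that does not contain it. The paper organizes this as a three-way case split on the comparison of $e_1$ and $e_k$ at $v_1$ together with a minimal or maximal violating index, whereas you package it as a propagation claim applied in both cyclic directions; the difference is purely organizational.
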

\begin{proof}
First, we consider the case where $e_1 \succ_{v_1} e_k$. 
If there does not exist an integer $\ell \in [k]$ 
such that 
$e_{\ell-1} \succsim_{v_{\ell}} e_{\ell}$, then 
(C1) holds. 
We assume that 
there exists an integer $\ell \in [k]$ such that 
$e_{\ell-1} \succsim_{v_{\ell}} e_{\ell}$. 
Let $j$ be the minimum integer in $[k]$ such that 
$e_{j-1} \succsim_{v_{j}} e_{j}$. 
Notice that we have $j > 1$. 
Then $e_{j-1} \succ_{v_{j-1}} e_{j-2}$. 
However, $e_{j-1}$ blocks one of $\mu$ and $\sigma$. 
This is a contradiction. 

Next, we consider the case where $e_k \succ_{v_1} e_1$. 
If there does not exist an integer $\ell \in [k]$ 
such that 
$e_{\ell} \succsim_{v_{\ell}} e_{\ell-1}$, then 
(C2) holds.
Thus, we assume that 
there exists an integer $\ell \in [k]$ such that 
$e_{\ell} \succsim_{v_{\ell}} e_{\ell-1}$. 
Let $j$ be the maximum integer in $[k]$ such that 
$e_{j} \succsim_{v_j} e_{j-1}$. 
If $j = k$, then we define $e_{j+1} := e_1$. 
Then $e_{j} \succ_{v_{j+1}} e_{j+1}$. 
However, $e_j$ blocks one of $\mu$ and $\sigma$. 
This is a contradiction. 

Third, we consider the case where $e_1 \sim_{v_1} e_k$. 
If there does not exist an integer $\ell \in [k]$ such that 
$e_{\ell} \not\sim_{v_{\ell}} e_{\ell-1}$, then (C3) holds.
Thus, we assume that 
there exists an integer $\ell \in [k]$ such that 
$e_{\ell} \not\sim_{v_{\ell}} e_{\ell-1}$. 
Let $j$ be the minimum integer in $[k]$ such that 
$e_{j} \not\sim_{v_j} e_{j-1}$.
Notice that $j > 1$.  
Assume that 
$e_{j} \succ_{v_{j}} e_{j-1}$. 
Since $\mu,\sigma \in \mathcal{S}$, 
$e_k \succ_{v_k} e_{k-1}$. 
This implies that 
$e_k$ blocks one of $\mu$ and $\sigma$. 
Assume that 
$e_{j-1} \succ_{v_{j}} e_{j}$. 
Then $e_{j-1}$ 
blocks one of $\mu$ and $\sigma$. 
\end{proof}

\begin{lemma} \label{lemma:structure_matching}
For every pair of elements $\mu, \sigma \in \mathcal{S}$, 
$\mu \wedge \sigma \in \mathcal{S}$ and $\mu \vee \sigma \in \mathcal{S}$. 
\end{lemma}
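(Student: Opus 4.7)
I would prove both inclusions by the same template, treating $\tau := \mu \wedge \sigma$ in detail and indicating the dual modifications for $\mu \vee \sigma$ at the end. The first step is to verify $\tau$ is a matching. By Theorem~\ref{theorem:structure_cover}, $V(\mu) = V(\sigma)$, so the symmetric difference $F = (\mu \setminus \sigma) \cup (\sigma \setminus \mu)$ has degree $0$ or $2$ at every vertex and decomposes as the disjoint union of the cycles in $\mathcal{C}(\mu,\sigma)$, whose edges alternate between $\mu$ and $\sigma$. Lemma~\ref{lemma:structure_cycle} classifies each cycle as (C1), (C2), or (C3). Unpacking these possibilities along the $\mu/\sigma$-alternation shows that on a (C1)-cycle every $V_1$-vertex strictly prefers its $\mu$-edge (and every $V_2$-vertex strictly prefers its $\sigma$-edge), a (C2)-cycle is the mirror image, and a (C3)-cycle is uniformly tied. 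Consequently, on every cycle the $V_1$-vertices agree on which side of $\{\mu,\sigma\}$ the $\wedge$-rule selects (using the tie-breaking convention on the (C3)-cycles), so the selected edges form a matching on the cycle vertices. Edges of $\mu \cap \sigma$ lie outside $F$ and enter $\tau$ unchanged; hence $\tau$ is a matching.

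Next I would prove $\tau$ is non-uniformly stable by contradiction: assume $e = (v,w) \in E \setminus \tau$ blocks $\tau$. I first rule out $e \in \mu \cup \sigma$. If $e \in \mu$, then $e \notin \tau$ forces $\tau(v) = \sigma(v)$ under the $\wedge$-rule, which requires $\sigma(v) \succ_v \mu(v) = e$; combined with $e \succsim_v \tau(v) = \sigma(v)$ this is a contradiction, and $e \in \sigma$ is symmetric. The definition of $\wedge$ now yields $\tau(v) \succsim_v \mu(v)$ and $\tau(v) \succsim_v \sigma(v)$ at $v \in V_1$, so the blocking hypothesis gives $e \succsim_v \mu(v)$ and $e \succsim_v \sigma(v)$. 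At $w \in V_2$ we only have $\tau(w) \in \{\mu(w), \sigma(w)\}$; without loss of generality $\tau(w) = \mu(w)$, so $e \succsim_w \mu(w)$. Thus $e$ weakly blocks $\mu$, which already contradicts its non-uniform stability when $e \in E_1$. When $e \in E_2$, the strong-blocking assumption produces some $u \in e$ with $e \succ_u \tau(u)$: at $u = v$ we upgrade via $\tau(v) \succsim_v \mu(v)$ to $e \succ_v \mu(v)$, and at $u = w$ the identity $\tau(w) = \mu(w)$ gives $e \succ_w \mu(w)$ directly; either way $e$ strongly blocks $\mu$, contradiction.

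For $\mu \vee \sigma$ the argument is symmetric with the roles of $V_1$ and $V_2$ exchanged. The same cycle classification shows that now $\tau(w) \succsim_w \mu(w)$ and $\tau(w) \succsim_w \sigma(w)$ at every $w \in V_2$ (on each cycle the $\vee$-rule picks the $V_2$-better edge), while at $v \in V_1$ one only has $\tau(v) \in \{\mu(v), \sigma(v)\}$. Ruling out $e \in \mu$ goes differently here: if $e = \mu(v) \notin \tau$, the $\vee$-rule forces $\tau(v) = \sigma(v)$ with $\mu(v) \succ_v \sigma(v)$, so $e \succ_v \sigma(v)$ and $e \succsim_w \sigma(w) = \tau(w)$, exhibiting $e$ as a block of $\sigma$ (noting $e \notin \sigma$ from $\mu(v) \succ_v \sigma(v)$), contradiction. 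The remainder then mirrors the $\wedge$ case with the two sides swapped.

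The main obstacle will be the bookkeeping in the $e \in E_2$ case: the single strict-preference vertex supplied by the strong-blocking assumption on $\tau$ must be transferred into strict preference over the corresponding edge of $\mu$ or $\sigma$. On the "dominating" side ($V_1$ for $\wedge$, $V_2$ for $\vee$) this is automatic because the definition forces $\tau(u)$ to weakly beat both $\mu(u)$ and $\sigma(u)$; on the other side it follows from the case split $\tau(w) \in \{\mu(w),\sigma(w)\}$ made at the outset. Nothing deeper than the cycle classification in Lemma~\ref{lemma:structure_cycle} is required.
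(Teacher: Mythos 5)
Your proposal is correct and follows essentially the same route as the paper: the cycle decomposition together with Lemma~\ref{lemma:structure_cycle} gives the matching property, and a blocking edge of $\mu \odot \sigma$ is transferred to a blocking edge of whichever of $\mu,\sigma$ supplies the partner at one endpoint, using that at the other endpoint the $\odot$-partner weakly dominates both. Your only deviation is organizational—choosing the witness $\xi \in \{\mu,\sigma\}$ uniformly at the $V_2$-endpoint (resp.\ $V_1$-endpoint) instead of enumerating the mixed sub-cases as the paper does—which is a harmless streamlining.
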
 
\begin{proof}
Let $\mu,\sigma$ be elements in $\mathcal{S}$.
For every symbol $\odot \in \{\wedge, \vee\}$, 
Lemma~\ref{lemma:structure_cycle} implies that
one of the following statements holds 
for each element $(v_1,v_2,\dots,v_{2k+1}) \in \mathcal{C}(\mu,\sigma)$. 
\begin{itemize}
\item
$\{v_{2\ell-1},v_{2\ell}\} \in \mu \odot \sigma$ 
and 
$\{v_{2\ell},v_{2\ell+1}\} \notin \mu \odot \sigma$ 
for every integer $\ell \in [k]$. 
\item
$\{v_{2\ell-1},v_{2\ell}\} \notin \mu \odot \sigma$ 
and 
$\{v_{2\ell},v_{2\ell+1}\} \in \mu \odot \sigma$ 
for every integer $\ell \in [k]$. 
\end{itemize}
Thus, $\mu \wedge \sigma, \mu \vee \sigma$ are matchings in $G$.

Let $\odot$ be a symbol in $\{\wedge,\vee\}$. 
We prove that $\mu \odot \sigma$ is non-uniformly stable. 
Let $e = (v,w)$ be an edge in $E \setminus (\mu \odot \sigma)$. 
If $e \in \mu \cup \sigma$, then 
there exists an element $\xi \in \{\mu,\sigma\}$ 
such that, for every vertex $u \in e$, 
$\xi(u) = (\mu \odot \sigma)(u)$. 
Thus, in this case, since $\mu,\sigma \in \mathcal{S}$, 
$e$ does not block $\mu \odot \sigma$. 
Assume that $e \notin \mu \cup \sigma$.  
If there exists an element $\xi \in \{\mu,\sigma\}$ 
such that, for every vertex $u \in e$, 
$\xi(u) = (\mu \odot \sigma)(u)$, then 
since $\mu,\sigma \in \mathcal{S}$, 
$e$ does not block $\mu \odot \sigma$.
In what follows, we assume that there does not 
exist 
such an element $\xi \in \{\mu,\sigma\}$. 

First, we consider the case where
$(\mu \odot \sigma)(v) = \mu(v)$ and 
$(\mu \odot \sigma)(w) = \sigma(w)$. 
If $\odot = \wedge$, then 
$\mu(v) \succsim_v \sigma(v)$ and 
$\mu(w) \succ_w \sigma(w)$. 
This implies that 
$(\mu \odot \sigma)(v) \succsim_v \sigma(v)$ and 
$(\mu \odot \sigma)(w) \succsim_w \sigma(w)$. 
Thus, since $\sigma \in \mathcal{S}$, 
$e$ does not block $\mu \odot \sigma$.
If $\odot = \vee$, then 
$\sigma(v) \succsim_v \mu(v)$ and 
$\sigma(w) \succ_w \mu(w)$. 
This implies that 
$(\mu \odot \sigma)(v) \succsim_v \mu(v)$ and 
$(\mu \odot \sigma)(w) \succsim_w \mu(w)$. 
Thus, since $\mu \in \mathcal{S}$, 
$e$ does not block $\mu \odot \sigma$.

Next, we consider the case where
$(\mu \odot \sigma)(v) = \sigma(v)$ and 
$(\mu \odot \sigma)(w) = \mu(w)$. 
If $\odot = \wedge$, then 
$\sigma(v) \succ_v \mu(v)$ and 
$\sigma(w) \succsim_w \mu(w)$. 
This implies that 
$(\mu \odot \sigma)(v) \succsim_v \mu(v)$ and 
$(\mu \odot \sigma)(w) \succsim_w \mu(w)$. 
Thus, since $\mu \in \mathcal{S}$, 
$e$ does not block $\mu \odot \sigma$.
If $\odot = \vee$, then 
$\mu(v) \succ_v \sigma(v)$ and 
$\mu(w) \succsim_w \sigma(w)$. 
This implies that 
$(\mu \odot \sigma)(v) \succsim_v \sigma(v)$ and 
$(\mu \odot \sigma)(w) \succsim_w \sigma(w)$. 
Thus, since $\sigma \in \mathcal{S}$, 
$e$ does not block $\mu \odot \sigma$.
\end{proof} 

The following lemma is the same as \cite[Lemma~10]{Manlove02}.
For completeness, we give its proof. 

\begin{lemma} \label{lemma:structure_operation} 
Let $\mu,\sigma,\xi$ be elements in $\mathcal{S}$, and let $v$ be a vertex in $V_1$. 
Furthermore, let $\oplus$ be a symbol in $\{\wedge, \vee\}$, and 
let $\ominus$ be the symbol in $\{\wedge, \vee\} \setminus \{\oplus\}$. 
Then the following statements hold.
\begin{gather}
(\mu \oplus \mu)(v) = \mu(v), \ \ 
(\mu \oplus \sigma)(v) \sim_v (\sigma \oplus \mu)(v), \ \ 
(\mu \oplus (\mu \ominus \sigma))(v) = \mu(v).
\label{eq_1:lemma:structure_operation} \\
(\mu \oplus (\sigma \oplus \xi))(v) = ((\mu \oplus \sigma) \oplus \xi)(v). \label{eq_2:lemma:structure_operation} \\  
(\mu \oplus (\sigma \ominus \xi))(v) = ((\mu \oplus \sigma) \ominus (\mu \oplus \xi))(v). 
\label{eq_3:lemma:structure_operation}
\end{gather}
\end{lemma}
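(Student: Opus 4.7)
The plan is to verify each identity pointwise: since the operations $\wedge$ and $\vee$ are defined coordinate-wise on $V_1$, every assertion in the lemma reduces to an identity at a single vertex $v \in V_1$. Fix such $v$ and abbreviate $a := \mu(v)$, $b := \sigma(v)$, $c := \xi(v)$. Each identity then becomes a statement about the two binary operations on $E(v) \cup \{\emptyset\}$ that pick, respectively, the $\succsim_v$-best and the $\succsim_v$-worst of two inputs, with indifference ties broken in favor of the first argument. Because $\succsim_v$ is transitive and complete, the triple $(a,b,c)$ lies in one of finitely many configurations under $\succsim_v$ (a strict chain, a single tie, or all three $\sim_v$-equivalent), and each identity can be checked by enumerating these configurations.

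For \eqref{eq_1:lemma:structure_operation}, idempotence is immediate from $a \succsim_v a$; commutativity up to $\sim_v$ is automatic when one of $a,b$ strictly dominates, while in the indifferent case $a \sim_v b$ the tie-breaking yields $(\mu \oplus \sigma)(v) = a$ and $(\sigma \oplus \mu)(v) = b$, which are $\sim_v$-equivalent as required; absorption follows by substituting the output of the inner operation in each subcase. For the associativity identity \eqref{eq_2:lemma:structure_operation} and the distributivity identity \eqref{eq_3:lemma:structure_operation}, the approach is to systematically enumerate every ordering of $\{a,b,c\}$ under $\succsim_v$, together with every tie pattern, and verify that both sides produce the same element of $E(v) \cup \{\emptyset\}$. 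As an example, for \eqref{eq_3:lemma:structure_operation} with $\oplus = \wedge$, $\ominus = \vee$, and configuration $a \succ_v b \succ_v c$: the LHS equals $a$ because $(\sigma \vee \xi)(v) = c$ and $a \succ_v c$, while the RHS also equals $a$ because $(\mu \wedge \sigma)(v) = (\mu \wedge \xi)(v) = a$.

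The only real obstacle is the tedium of enumerating the subcases that involve $\sim_v$-ties, since the asymmetric tie-breaking rule means the outcome of $\wedge$ or $\vee$ can depend on which argument appears first; however, a careful inspection shows that the rule is consistent across all of the compositions appearing in \eqref{eq_2:lemma:structure_operation} and \eqref{eq_3:lemma:structure_operation}, so no new ideas are required. This is precisely the strategy used in the proof of \cite[Lemma~10]{Manlove02}, and the argument transfers verbatim to the non-uniform setting because the identities are purely lattice-theoretic in $\succsim_v$ and never appeal to whether blocking is weak or strong.
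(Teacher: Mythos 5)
Your proposal is correct and takes essentially the same route as the paper: both reduce each identity to a pointwise statement at $v$ and verify it by a finite case analysis on the $\succsim_v$-configuration of $\mu(v),\sigma(v),\xi(v)$, paying attention to the asymmetric tie-breaking; your description of $\wedge$ and $\vee$ as the $\succsim_v$-best and $\succsim_v$-worst with ties resolved toward the first argument, and your worked subcase, are accurate. The only difference is organizational: the paper compresses the enumeration by treating the all-tied and all-distinct configurations as immediate and parametrizing the remaining mixed case as $\phi_1(v)\sim_v\phi_2(v)\not\sim_v\phi_3(v)$ with subcases on the direction of the strict comparison and on whether $\phi_3=\mu$, whereas you propose a full enumeration of orderings and tie patterns.
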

\begin{proof}
It is not difficult to see that 
\eqref{eq_1:lemma:structure_operation}
follows from the definition. 
If $\phi_1(v) \sim_v \phi_2(v)$ holds for every 
pair of distinct elements $\phi_1,\phi_2 \in \{\mu,\sigma,\xi\}$, then 
\eqref{eq_2:lemma:structure_operation} and
\eqref{eq_3:lemma:structure_operation}
clearly hold.
In addition, if $\phi_1(v) \not\sim_v \phi_2(v)$ holds for every 
pair of distinct elements $\phi_1,\phi_2 \in \{\mu,\sigma,\xi\}$, then 
\eqref{eq_2:lemma:structure_operation} and
\eqref{eq_3:lemma:structure_operation}
clearly hold.
Thus, in what follows, we assume that 
there exist distinct 
elements $\phi_1,\phi_2,\phi_3 \in \{\mu,\sigma,\xi\}$
such that 
$\phi_1(v) \sim_v \phi_2(v) \not\sim_v \phi_3(v)$, and 
then we consider 
\eqref{eq_2:lemma:structure_operation} and
\eqref{eq_3:lemma:structure_operation}. 

First, we consider the case 
where $\oplus = \wedge$.
If $\phi_3(v) \succ_v \phi_2(v)$, then 
the left-hand side and the right-hand side 
of \eqref{eq_2:lemma:structure_operation} (resp.\ \eqref{eq_3:lemma:structure_operation})
are $\phi_3(v)$ (resp.\ $\mu(v)$). 
Thus, we assume that $\phi_2(v) \succ_v \phi_3(v)$
If $\phi_3 = \mu$, then 
the left-hand sides and the right-hand sides 
of 
\eqref{eq_2:lemma:structure_operation} and 
\eqref{eq_3:lemma:structure_operation}
are $\sigma(v)$.
Otherwise (i.e., $\phi_3 \neq \mu$), 
the left-hand sides and the right-hand sides 
of 
\eqref{eq_2:lemma:structure_operation} and 
\eqref{eq_3:lemma:structure_operation}
are $\mu(v)$.

Next, we consider the case 
where $\oplus = \vee$.
If $\phi_2(v) \succ_v \phi_3(v)$, then 
the left-hand side and the right-hand side 
of \eqref{eq_2:lemma:structure_operation} (resp.\ 
\eqref{eq_3:lemma:structure_operation})
are $\phi_3(v)$ (resp.\ $\mu(v)$). 
Thus, we assume that $\phi_3(v) \succ_v \phi_2(v)$
If $\phi_3 = \mu$, then 
the left-hand sides and the right-hand sides 
of 
\eqref{eq_2:lemma:structure_operation} and 
\eqref{eq_3:lemma:structure_operation}
are $\sigma(v)$.
Otherwise (i.e., $\phi_3 \neq \mu$), 
the left-hand sides and the right-hand sides 
of 
\eqref{eq_2:lemma:structure_operation} and 
\eqref{eq_3:lemma:structure_operation}
are $\mu(v)$.
\end{proof} 

Define the binary relation $\equiv$ on $\mathcal{S}$ as follows. 
For each pair of elements $\mu,\sigma \in \mathcal{S}$, 
$\mu \equiv \sigma$ if and only if $\mu(v) \sim_v \sigma(v)$ holds 
for every vertex $v \in V_1$.
Then it is not difficult to see that 
$\equiv$ is an equivalence relation on $\mathcal{S}$. 
Define $\mathbb{C}$ as the set of equivalence classes 
given by $\equiv$. 
Then, for each element $\mu \in \mathcal{S}$, 
let $\langle \mu \rangle$ denote the element $C \in \mathbb{C}$ such that 
$\mu \in C$. 
Define the binary operations $\wedge_{\equiv},\vee_{\equiv}$ on 
$\mathbb{C}$ as follows.
For each pair of elements $\langle \mu \rangle, \langle \sigma \rangle \in \mathbb{C}$, 
we define 
$\langle \mu \rangle \wedge_{\equiv} \langle \sigma \rangle := \langle \mu \wedge \sigma\rangle$
and 
$\langle \mu \rangle \vee_{\equiv} \langle \sigma \rangle := \langle \mu \vee \sigma\rangle$. 
Notice that 
since Lemma~\ref{lemma:structure_matching} implies that
$\mu \wedge \sigma \in \mathcal{S}$ and 
$\mu \vee \sigma \in \mathcal{S}$
for 
every pair of elements $\mu,\sigma \in \mathcal{S}$, 
the binary operations $\wedge_{\equiv}, \vee_{\equiv}$ are well-defined. 

We are now ready to give the main result of this section. 

\begin{theorem} \label{theorem:structure_main} 
$(\mathbb{C},\wedge_{\equiv},\vee_{\equiv})$ is a distributive lattice. 
\end{theorem}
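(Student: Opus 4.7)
The plan is to verify the lattice and distributive-lattice axioms by unpacking everything at the level of individual vertices in $V_1$, where Lemma~\ref{lemma:structure_operation} has already done essentially all of the work. Note that for any matching $\mu$ in $\mathcal{S}$, Theorem~\ref{theorem:structure_cover} implies $V_1(\mu) = V_1(\sigma)$ for all $\sigma \in \mathcal{S}$, so a matching in $\mathcal{S}$ is determined by its restriction to $V_1$, and moreover $\mu \equiv \sigma$ iff $\mu(v) \sim_v \sigma(v)$ holds at every such $v \in V_1$. Hence everything reduces to pointwise statements in $E(v) \cup \{\emptyset\}$.

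First I would check that $\wedge_\equiv$ and $\vee_\equiv$ are well-defined on $\mathbb{C}$. Suppose $\mu \equiv \mu'$ and $\sigma \equiv \sigma'$. Fix $v \in V_1$ and the symbol $\oplus \in \{\wedge, \vee\}$. Since $\mu(v) \sim_v \mu'(v)$ and $\sigma(v) \sim_v \sigma'(v)$, transitivity and completeness of $\succsim_v$ imply that the relative comparison of $\mu(v)$ vs $\sigma(v)$ agrees with that of $\mu'(v)$ vs $\sigma'(v)$ modulo $\sim_v$, so $(\mu \oplus \sigma)(v) \sim_v (\mu' \oplus \sigma')(v)$. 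Hence $\mu \oplus \sigma \equiv \mu' \oplus \sigma'$, and the operations on $\mathbb{C}$ are well-defined (using Lemma~\ref{lemma:structure_matching} to ensure $\mu \oplus \sigma \in \mathcal{S}$).

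Next I would verify the lattice axioms: idempotence, commutativity, associativity, absorption, and the two distributive laws. Idempotence $\langle \mu \rangle \oplus_\equiv \langle \mu \rangle = \langle \mu \rangle$ and absorption $\langle \mu \rangle \oplus_\equiv (\langle \mu \rangle \ominus_\equiv \langle \sigma \rangle) = \langle \mu \rangle$ are the first and third identities in \eqref{eq_1:lemma:structure_operation} applied pointwise. Commutativity $\langle \mu \rangle \oplus_\equiv \langle \sigma \rangle = \langle \sigma \rangle \oplus_\equiv \langle \mu \rangle$ is the second identity in \eqref{eq_1:lemma:structure_operation}. Associativity $(\langle \mu \rangle \oplus_\equiv \langle \sigma \rangle) \oplus_\equiv \langle \xi \rangle = \langle \mu \rangle \oplus_\equiv (\langle \sigma \rangle \oplus_\equiv \langle \xi \rangle)$ is \eqref{eq_2:lemma:structure_operation}, and distributivity $\langle \mu \rangle \oplus_\equiv (\langle \sigma \rangle \ominus_\equiv \langle \xi \rangle) = (\langle \mu \rangle \oplus_\equiv \langle \sigma \rangle) \ominus_\equiv (\langle \mu \rangle \oplus_\equiv \langle \xi \rangle)$ is \eqref{eq_3:lemma:structure_operation}. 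In each case the identity in Lemma~\ref{lemma:structure_operation} holds as an equality or $\sim_v$-equivalence at every $v \in V_1$, which is exactly the condition for the corresponding equality of equivalence classes.

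The hard part has already been absorbed into Lemma~\ref{lemma:structure_operation}; what remains is genuinely routine once the well-definedness bookkeeping is done. The one subtle point is to be careful that the operations $\wedge, \vee$ on $\mathcal{S}$ themselves are only associative/commutative/distributive up to $\equiv$ (not on the nose), so the passage to $\mathbb{C}$ is essential in order to actually obtain a distributive lattice rather than merely a structure satisfying the axioms modulo indifference.
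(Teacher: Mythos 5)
Your proof is correct and follows essentially the same route as the paper: the lattice axioms on $\mathbb{C}$ are read off pointwise from the identities in Lemma~\ref{lemma:structure_operation}, using that two matchings in $\mathcal{S}$ that agree up to $\sim_v$ at every $v \in V_1$ determine the same equivalence class. Your explicit verification that $\wedge_{\equiv}$ and $\vee_{\equiv}$ do not depend on the chosen representatives is a small extra step that the paper leaves implicit, but it does not change the argument.
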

\begin{proof}
Let $\mu,\sigma,\xi$ be elements in $\mathcal{S}$. 
Furthermore, let $\oplus$ be a symbol in $\{\wedge_{\equiv}, \vee_{\equiv}\}$, and 
let $\ominus$ be the symbol in $\{\wedge_{\equiv}, \vee_{\equiv}\} \setminus \{\oplus\}$. 
Then it suffices to prove that the following statements hold. 
\begin{gather}
\langle \mu \rangle \oplus \langle \mu \rangle = \langle \mu \rangle, \ \ 
\langle \mu \rangle \oplus \langle \sigma \rangle 
= \langle \sigma \rangle \oplus \langle \mu \rangle, \ \ 
\langle \mu \rangle \oplus (\langle \mu \rangle \ominus \langle \sigma \rangle) 
= \langle \mu \rangle. \label{eq_1:theorem:structure_main} \\
\langle \mu \rangle \oplus (\langle \sigma \rangle \oplus \langle \xi \rangle) 
= (\langle \mu \rangle \oplus \langle \sigma \rangle) 
\oplus \langle \xi \rangle. \label{eq_2:theorem:structure_main} \\  
\langle \mu \rangle \oplus (\langle \sigma \rangle \ominus \langle \xi \rangle) 
= (\langle \mu \rangle \oplus \langle \sigma \rangle) 
\ominus (\langle \mu \rangle \oplus \langle \xi \rangle). 
\label{eq_3:theorem:structure_main}
\end{gather}
Then 
since 
$\langle \mu_1 \rangle = \langle \mu_2 \rangle$
for every pair of elements $\mu_1,\mu_2 \in \mathcal{S}$
such that $\mu_1(v) \sim_v \mu_2(v)$ for every vertex $v \in V_1$, 
\eqref{eq_1:lemma:structure_operation} 
implies 
\eqref{eq_1:theorem:structure_main}. 
Furthermore, 
\eqref{eq_2:lemma:structure_operation} 
and 
\eqref{eq_3:lemma:structure_operation} 
imply
\eqref{eq_2:theorem:structure_main} 
and 
\eqref{eq_3:theorem:structure_main}, 
respectively.
\end{proof}

\bibliographystyle{plain}
\bibliography{non-uniform_bib}

\end{document}